\newcommand{\calD}{\mathcal{D}}
\newcommand{\calL}{\mathcal{L}}
\newcommand{\calO}{\mathcal{O}}
\newcommand{\calP}{\mathcal{P}}
\newcommand{\calT}{\mathcal{T}}
\newcommand{\bbA}{\mathbb{A}} 
\newcommand{\bbB}{\mathbb{B}}
\newcommand{\sfA}{\mathsf{A}} 
\newcommand{\sfP}{\mathsf{P}} 
\newcommand{\sfR}{\mathsf{R}}
\newcommand{\sfS}{\mathsf{S}}
\newcommand{\sfT}{\mathsf{T}} 
\newcommand{\sff}{\mathsf{f}} 
\renewcommand{\|}{~|~}
\newcommand{\powset}{\mathcal{P}}
\newcommand{\ran}{\mathrm{ran}}
\newcommand{\sop}{[} 
\newcommand{\scl}{]} 
\newcommand{\sel}[2]{#1 \backslash #2} 
\newcommand{\unsubst}[2]{\sop \sel{#1}{#2} \scl} 
\newcommand{\Liff}{\Leftrightarrow} 
\newcommand{\Impl}{\Rightarrow} 
\renewcommand{\land}{\wedge}
\renewcommand{\lor}{\vee}
\newcommand{\proves}{\vdash}
\newcommand{\AxLit}{\ensuremath{\mathsf{Ax1}}\xspace}
\newcommand{\AxTop}{\ensuremath{\mathsf{Ax2}}\xspace}
\newcommand{\RuOr}{\ensuremath{\mathsf{R}_{\lor}}\xspace}
\newcommand{\RuAnd}{\ensuremath{\mathsf{R}_{\land}}\xspace}
\newcommand{\RuBox}{\ensuremath{\mathsf{\mathsf{R}_{\Box}}}\xspace}
\newcommand{\RuFp}[1]{\ensuremath{\mathsf{R}_{#1}}\xspace}
\newcommand{\RuMu}{\RuFp{\mu}}
\newcommand{\RuNu}{\RuFp{\nu}}
\newcommand{\RuDischarge}[1][\dx]{\ensuremath{\mathsf{D}^{#1}}\xspace}
\newcommand{\RuResolve}{\ensuremath{\mathsf{Resolve}}\xspace}
\newcommandx{\RuCompress}[2][1= ,2= ]{\ensuremath{\mathsf{Compress}_{#1}^{#2}}\xspace}
\newcommand{\Tokens}{\mathcal{D}}
\newcommand{\dx}{\ensuremath{\mathsf{x}}}
\newcommand{\dy}{\ensuremath{\mathsf{y}}}
\newcommand{\upto}{\upharpoonright}
\newcommand{\BT}{\ensuremath{\mathsf{BT}}\xspace}
\newcommand{\BTInf}{\ensuremath{\mathsf{BT}^{\infty}}\xspace}
\newcommand{\JS}{\ensuremath{\mathsf{JS}}\xspace}
\newcommand{\NW}{\ensuremath{\mathsf{NW}}\xspace}
\newcommand{\Focus}{\ensuremath{\mathsf{Focus}}\xspace}
\newcommand{\leaves}{\mathrm{leaves}}
\newcommand{\tree}{\mathrm{tree}}
\newcommand{\green}{\ensuremath{\mathrm{green}}\xspace}
\newcommand{\white}{\ensuremath{\mathrm{white}}\xspace}
\newcommand{\red}{\ensuremath{\mathrm{red}}\xspace}
\newcommand{\TSeq}{\mathrm{TSeq}}
\newcommand{\minL}{\mathrm{minL}}
\newcommand{\dia}{\Diamond}
\newcommand{\atneg}[1]{\overline{#1}}
\newcommand{\isbnf}{\;::=\;}
\newcommand{\divbnf}{\;\mid\;}
\newcommand{\Prop}{\mathsf{Prop}}
\newcommand{\muML}{\mathcal{L}_{\mu}}
\newcommand{\Clos}{\mathsf{Clos}}
\newcommand{\Fix}{\mathsf{Fix}}
\renewcommand{\phi}{\varphi}
\newcommand{\tracestep}{\to_{C}}
\begin{document}
	\title{Proof Systems for the Modal $\mu$-Calculus Obtained by Determinizing Automata} 
	\titlerunning{Proof Systems for the Modal $\mu$-Calculus}
	%
	\author{Maurice Dekker \and
		Johannes Kloibhofer \and
		Johannes Marti \and
		Yde Venema}
	\authorrunning{M. Dekker et al.}
	%
	\institute{ILLC, University of Amsterdam, Netherlands}
	\maketitle              
	\begin{abstract}
Automata operating on infinite objects feature prominently in the theory of the modal $\mu$-calculus. 
One such application concerns the tableau games introduced by Niwi\'{n}ski \& Walukiewicz, 
of which the winning condition for infinite plays can be naturally checked by a nondeterministic parity stream automaton.
Inspired by work of Jungteerapanich and Stirling we show how determinization constructions of this automaton may be used to directly obtain proof systems for the $\mu$-calculus.
More concretely, we introduce a  binary tree construction for determinizing nondeterministic parity stream automata. 
Using this construction we define the annotated cyclic proof system \BT, where formulas are annotated by tuples of binary strings. 
Soundness and Completeness of this system follow almost immediately from the correctness of the determinization method.	
	
	\keywords{modal mu-calculus  
		\and derivation system 
		\and determinisation of B\"uchi and parity automata
		\and non-wellfounded and cyclic proofs
	}
\end{abstract}

\section{Introduction}\label{sec.intro}

\paragraph{The modal $\mu$-calculus} The modal $\mu$-calculus is a
natural extension of basic modal logic with explicit least and greatest
fixpoint operators.
Allowing the formulation of various recursive phenomena,
this extension raises the expressive power of the language (at least when it
comes to 
bisimulation-invariant properties of transition systems) to that of monadic
second-order logic \cite{Janin96}. The $\mu$-calculus is generally regarded 
as a universal specification language, since it embeds most other logics that
are used for this purpose, such as LTL, CTL, CTL$^*$ and PDL. 
Despite its expressive power the $\mu$-calculus has still reasonable
computational properties; its model checking problem is in
quasi-polynomial time \cite{Calude17} and its satisfiability problem is
\textsc{exptime}-complete \cite{Emerson99}. Another interesting
feature of the theory of the modal $\mu$-calculus lies in its
connections with other fields, in particular the theory of
finite automata operating on infinite objects, and that of infinite
games.

\paragraph{Derivation systems}
Given the importance of the modal $\mu$-calculus, there is a natural interest 
in the development and study of derivation systems for its validities.
And indeed, already in \cite{Kozen1983} Kozen proposed an axiomatization.
Despite the naturality of this axiom system, he only established a partial 
completeness result, and it took a substantial amount of time before 
Walukiewicz \cite{Walukiewicz2000} managed to prove soundness and 
completeness for the full language.

Kozen's axiomatization amounts to a Hilbert-style derivation system, making 
it less attractive for proof search.
If one is interested in a cut-free system, a good starting point is  the 
two-player tableau-style game introduced by Niwi\'{n}ski \& Walukiewicz 
\cite{Niwinski1996}.
Here we will present their system in the shape of a derivation system $\NW$
(this change of perspective can be justified by identifying winning 
strategies for one of the players in the game with $\NW$-proofs).
$\NW$ is a one-sided sequent system which allows for infinite proofs: although 
its proof rules are completely  standard (and finitary), due to the unfolding 
rules for the fixpoint operators, derivations may have infinite branches.
A crucial aspect of the $\NW$-system is that one has to keep track of the 
\emph{traces} of individual formulas along the infinite branches.
A derivation will only count as a proper proof if each of its infinite branches
is \emph{successful}, in the sense that it carries a so-called $\nu$-trace: 
a trace which is dominated by a \emph{greatest} fixpoint operator.

This condition is easy to formulate but not so nice to work with.
One could describe the subsequent developments in the proof theory for the modal 
$\mu$-calculus as a series of modifications of the system $\NW$ which aim to 
get a grip on the complexities and intricacies of the above-mentioned traces,
and in particular, to use the resulting ``trace management'' for the introduction of 
finitary, cyclic proof systems.
Landmark results were obtained by Jungteerapanich \cite{Jungteerapanich2010}
and Stirling \cite{Stirling2014}, who introduced cyclic proof systems for the 
$\mu$-calculus, two calculi that we will identify here under the name $\JS$.

\paragraph{Automata \& Derivation systems}

Applications of automata theory are ubiquitous in the theory of the modal 
$\mu$-calculus, and the area of proof theory is no exception.
In particular, Niwi\'{n}ski \& Walukiewicz \cite{Niwinski1996} observed that 
infinite matches of their game, corresponding to infinite branches in an 
$\NW$-derivation, can be seen as infinite words or \emph{streams} over some 
finite alphabet.
It follows that \emph{stream automata} (automata operating on infinite words) 
can be used to determine whether such a match/branch carries a $\nu$-trace.
Niwi\'nski \& Walukiewicz used this perspective to link 
their results to the exponential-time complexity of the satisfiability problem 
for the $\mu$-calculus. 

A key contribution of Jungteerapanich and Stirling
\cite{Jungteerapanich2010,Stirling2014} was to bring automata \emph{inside} the 
proof system.
The basic idea would be to decorate each sequent in a derivation with a state
of the stream automaton which recognizes whether an infinite branch is successful
or not; starting from the root, the successive states decorating the sequents 
on a given branch simply correspond to a run of the automaton on this branch.
For this idea to work one needs the stream automaton to be
\emph{deterministic}.
To see this, observe that two successful but distinct branches in a derivation 
would generally require two distinct runs, and in the case of a nondeterministic 
automaton, these two runs might already diverge before the two branches split.
 
Interestingly, there is a natural stream automaton recognizing the successful
branches of an $\NW$-derivation:
One may simply take the states of such an automaton to be the formulas
in the (Fischer-Ladner) \emph{closure} of the root sequent.
But given the \emph{nondeterministic} format of this automaton, before it can be 
used in a proof system, we need to transform it into an equivalent deterministic 
one. 
This explains the relevance of constructions for determinizing stream automata  
to the proof theory of the modal $\mu$-calculus.

\paragraph{Determinization of stream automata}
Using the ideas we just sketched, one may obtain sound and complete derivation
systems for the modal $\mu$-calculus in an easy way.
For any deterministic automaton $\bbA$ that recognizes the successful branches 
in $\NW$-derivations, one could simply introduce new-style sequents 
consisting of an $\NW$-sequent decorated with a state of $\bbA$, and 
adapt the proof rules of $\NW$ incorporating the transition map of $\bbA$.
This could be done in such a way that the stream of decorations of an infinite 
branch corresponds to the run of $\bbA$ on the 
stream of sequents of the same branch.
The trace condition of $\NW$-derivations could then be replaced by the 
acceptance condition of $\bbA$ (which is generally much simpler, since it does
not refer to traces).

More interesting is to use specific determinization constructions, in 
order to design more attractive proof systems or to prove results \emph{about} 
the derivation system (and thus, potentially, about the $\mu$-calculus).
In particular, some determinization constructions are based on a \emph{power 
construction}, meaning that the states of the deterministic automaton consist
of \emph{macrostates} (\emph{subsets} of the nondeterministic original) with
some additional structure.
Such constructions allow for proof calculi where this additional structure is 
incorporated into the sequents.
For instance, the derivation system $\JS$ is based on the well-known Safra 
construction  \cite{Safra1988}, in which the states of the deterministic 
automaton consist of macrostates of the original automaton that are organised
by means of so-called \emph{Safra trees}.
Concretely, the (augmented) sequents in $\JS$ consist of a set of 
\emph{annotated formulas}, with the annotations indicating the position of the
formula in the Safra tree and a so-called \emph{control} which provides 
additional information on the Safra tree.

\paragraph{Our contribution}
Our overall goal is to explicitize the role of automata theory in the design 
of derivation systems for the modal $\mu$-calculus (and other fixpoint
logics).
Our point is that distinct determinization constructions lead to distinct 
sequent system, and that we may look for alternatives to the Safra construction.
Concretely the contribution of this paper is threefold:
\begin{enumerate}
\item
We provide a new determinization construction for both B\"uchi and parity stream
automata which is based on binary trees. 
Our construction is similar to constructions related to so-called profile trees
\cite{Fogarty2015,Loeding2019}.
\item
We apply our construction to obtain a new derivation system $\BT$
for the modal $\mu$-calculus.
While our system is similar in spirit to the system $\JS$, a key difference is 
that our sequents consist of annotated formulas, and nothing else.
\item
We establish the soundness and completeness of $\BT$.
A distinguishing feature of our approach is that (up to some optimizations)
this result is a \emph{direct} consequence of the soundness and completeness 
of $\NW$ and the adequacy of our determinization construction.
\end{enumerate}

\paragraph{Related work}
There is an extensive literature on applications of automata theory in the theory of the modal $\mu$-calculus, among others \cite{Doumane2017,Janin1995,Janin96,Wilke2001}. Jungteerapanich and Stirling \cite{Jungteerapanich2010,Stirling2014} were the first to obtain an annotated proof system inspired by the determinization of automata.
The proof system \Focus for the alternation-free $\mu$-calculus designed by Marti \& Venema \cite{Marti2021} originates with a rather simple determinization 
construction for so-called weak automata.
In \cite{Leigh2023}, Leigh \& Wehr also take a rather general approach towards
the use of determinization constructions in the design of derivation systems, 
but they confine attention to the Safra construction.

 \paragraph{Overview of paper}
 In the next section we provide the necessary background material on binary 
 trees, on $\omega$-automata, on the modal $\mu$-calculus and the proof system
\NW ; doing so we fix our notation.
 In Section~\ref{sec.det} we introduce a new determinization method for 
 nondeterministic Büchi and parity automata.
 We will use this construction to prove the soundness and completeness of the 
 proof system \BT, which we introduce in Section \ref{sec.BTproofs}. 
 All missing proofs can be found in the appendix.


\section{Preliminaries}\label{sec.prelim}

\paragraph{Binary trees}

We let $2^*$ denote the set of \emph{binary strings}; we write $<$ for the
lexicographical order of $2^{*}$, and $\sqsubseteq$ for the (initial) substring
relation given by $s \sqsubseteq t$ if $sr = t$ for some $r$.
\emph{Substitution} for binary strings is defined in the following way: 
Let $s,t,\tilde{s},r \in 2^*$ be such that $s = t \tilde{s}$, then  $s\unsubst{t}{r}$ denotes the binary string $r \tilde{s}$.
A \emph{binary tree} is a finite set of binary strings $T \subseteq 2^*$ such 
that $s0 \in T \Impl s \in T$ and $s0 \in T \Liff s1 \in T$.
Here we let $\leaves(T) = \{s \in T ~|~s0 \notin T\}$ denote its set of 
\emph{leaves}, and $\minL(T)$ its \emph{minimal} leaf of $T$, i.e. the unique 
leaf of the form $0\cdots 0$. 
A set of binary strings $L$ is a \emph{set of leaves of a binary trees} if for 
all $s\neq t \in L$ we have $s \not\sqsubseteq t$ and $\tree(L) = \{s \in
2^* ~|~ \exists t \in L: s \sqsubseteq t \}$ is a binary tree.

\paragraph{Stream automata}

A \emph{non-deterministic automaton} over a finite alphabet $\Sigma$ is a 
quadruple $\bbA = \langle A, \Delta, a_I,\mathrm{Acc}\rangle$, where $A$ is a 
finite set, $\Delta: A \times \Sigma \rightarrow  \powset(A)$ is the transition 
function of $\bbA$, $a_I \in A$ its initial state and $\mathrm{Acc} 
\subseteq A^{\omega}$ its acceptance condition. 	
An automaton is called \emph{deterministic} if $|\Delta(a,y)| = 1$ for all pairs 
$(a,y) \in A \times \Sigma$.
A \emph{run} of an automaton $\bbA$ on a stream $w=y_0y_1y_2... \in 
\Sigma^{\omega}$ is a stream $a_0a_1a_2... \in A^{\omega}$ such that $a_0 = a_I$ 
and $a_{i+1} \in \Delta(a_i,y_i)$ for all $i \in \omega$. 
A stream $w$ is \emph{accepted} by $\bbA$ if there is a run of $\bbA$ on $w$, 
which is in $\mathrm{Acc}$; we define $\calL(\bbA)$ to be the set of all
accepting streams of $\bbA$.

The acceptance condition can be given in different ways:
A \emph{Büchi} condition is given as a subset $F \subseteq A$. 
The corresponding acceptance condition is the set of runs, which contain 
infinitely many states in $F$.
A \emph{parity} condition is given as a map $\Omega: A \rightarrow \omega$. The corresponding acceptance condition is the set of runs $\alpha$ such that $\min\{\Omega(a) \| a \text{ occurs infinitely often in } \alpha \}$ is even.
A \emph{Rabin} condition is given as a set $R = ((G_i,B_i))_{i \in I}$ of pairs of subsets of $A$.  The corresponding acceptance condition is the set of runs $\alpha$ for which there exists $i \in I$ such that $\alpha$ contains infinitely many states in $G_i$ and finitely many in $B_i$.
Automata with these acceptance conditions are called \emph{Büchi}, \emph{parity} 
and \emph{Rabin automata}, respectively.

\paragraph{Modal $\mu$-calculus: Syntax}
The set $\muML$ of \emph{formulas} of the modal $\mu$-calculus is generated 
by the grammar
\[
\phi \isbnf 
p \divbnf \atneg p 
\divbnf \bot \divbnf \top
\divbnf (\phi\lor\phi) \divbnf (\phi\land\phi) \divbnf
\dia\phi \divbnf \Box\phi 
\divbnf \mu x . \phi \divbnf \nu x . \phi,
\]
where $p$ and $x$ are taken from a fixed set $\Prop$ of propositional
variables and in formulas of the form $\mu x. \phi$ and $\nu x.
\phi$ there are no occurrences of $\atneg x$ in $\phi$. 

Formulas of the form $\mu x . \phi$ ($\nu x . \phi$) are called 
\emph{$\mu$-formulas} (\emph{$\nu$-formulas}, respectively); formulas 
of either kind are called \emph{fixpoint formulas}.
We write $\eta, \lambda \in \{\mu,\nu\}$ to denote an arbitrary fixpoint operator.
We use standard terminology and notation for the binding of variables by 
the fixpoint operators and for substitutions, and make sure only to 
apply substitution in situations where no variable capture will occur. 
An important use of the substitution operation concerns the \emph{unfolding}
$\chi[\xi/x]$ of a fixpoint formula $\xi = \eta x . \chi$.

Given two formulas $\phi,\psi \in \muML$ we write $\phi \tracestep \psi$ if 
$\psi$ is either a direct boolean or modal subformula of $\phi$, or else 
$\phi$ is a fixpoint formula and $\psi$ is its unfolding.
The \emph{closure} $\Clos(\Phi) \subseteq \muML$ of $\Phi \subseteq \muML$
is the least superset of $\Phi$ that is closed under this relation.
It is well known that $\Clos(\Phi)$ is finite iff $\Phi$ is finite.
A \emph{trace} is a sequence $(\phi_{n})_{n<\kappa}$, with $\kappa \leq
\omega$, such that $\phi_{n} \tracestep \phi_{n+1}$, for all $n
+ 1 < \kappa$.

We define a \emph{dependence order} on the fixpoint formulas occurring in $\Phi$, written $\Fix(\Phi)$, by setting $\eta x. \phi 
<_{\Phi} \lambda y. \psi$ (where smaller in $<_{\Phi}$ means being of higher 
priority) if $\Clos(\eta x.\phi) = \Clos(\lambda y.\psi)$ and $\eta x.\phi$ is a subformula of $\lambda y.\psi$.
One may define a \emph{parity function} $\Omega: \Fix(\Phi) \rightarrow
\omega$, which respects this order (i.e., $\Omega(\eta x. \phi)<\Omega(\lambda y.
\psi)$ if $\eta x. \phi <_{\Phi} \lambda y \psi$) and satisfies
$\Omega(\eta x. \phi)$ is even iff $\eta = \nu$. 
Let $\max_\Omega(\Phi) = \max\{\Omega(\nu x. \phi)\|\nu x. \phi \in 
\Fix(\Phi)\}$.

It is well known that any infinite trace $\tau = (\phi_{n})_{n<\kappa}$ features a unique formula 
$\phi$ that occurs infinitely often on $\tau$ and is
a subformula of $\phi_{n}$ for cofinitely many $n$. This formula is always a 
fixpoint formula, and where it is of the form $\eta x.\psi$ we call $\tau$ an
\emph{$\eta$-trace}.

Since the semantics of the modal $\mu$-calculus only plays an indirect role in 
our paper, we refrain from giving the definition.

\paragraph{Non-wellfounded proofs}

A sequent $\Gamma$ is a finite set of $\mu$-calculus formulas, possibly with 
additional structure such as annotations.
Rules have the following form, possibly with additional side conditions:
\bigskip

\begin{minipage}{\textwidth}
	\begin{minipage}{0.40\textwidth}
		\begin{prooftree}
			\hypo{\Gamma_1}
		    \hypo{\cdots}
			\hypo{\Gamma_n}
			\infer[left label=$R$: ]3[]{\Gamma}
		\end{prooftree}
	\end{minipage}
	\begin{minipage}{0.40\textwidth}
		\begin{prooftree}
			\hypo{[\Gamma]^{\dx}}
			\infer[no rule]1{{\vdots}}
			\infer[no rule]1{\Gamma}
			\infer[left label= \RuDischarge[\dx]: ~~]1[]{\Gamma}
		\end{prooftree}
\end{minipage}
\end{minipage}

\bigskip
\noindent
A rule $R$, where $n=0$, is called an axiom. 
The rules \RuDischarge[\dx] are called \emph{discharge} rules.  
Each discharge rule is marked by a unique \emph{discharge token} taken from a fixed infinite set $\calD = \{\dx,\dy,...\}$.

\begin{definition} 
A \emph{derivation system} $\calP$ is a set of rules. 
A $\calP$ \emph{derivation} $\pi = (T,P,\sfS,\sfR, \sff)$ is a quintuple such that
		$(T,P)$ is a, possibly infinite, tree with nodes $T$ and parent relation $P$;
$\sfS$ is a function that maps every node $u \in T$ to a non-empty sequent 
$\Sigma_u$;
$\sfR$ is a function that maps every node $u \in T$ to its \emph{label} 
$\sfR(u)$, which is either (i) the name of a rule in $\calP$ or 
(ii) a discharge token; and $\sff$ is a partial function that maps some nodes $u \in T$ to its \emph{principal formula} $\sff(u) \in \sfS(u)$. If a specific formula $\phi$ in the conclusion of a rule is designated, then $\sff(u) = \phi$ and otherwise $\sff(u)$ is undefined. 
To qualify as a derivation, such a quintuple is required to satisfy the following conditions:
\begin{enumerate}
\item
If a node is labeled with the name of a rule then it has as many children 
as the rule has premises, and the annotated sequents at the node and its 
children match the specification of the rules.
						
\item
If a node is labeled with a discharge token then it is a leaf.
For every leaf $l$ that is labeled with a discharge token $\dx \in \Tokens$ 
there is exactly one node $u \in T$ that is labeled with \RuDischarge[\dx].
This node $u$ and its child are proper ancestors of $l$. 
In this situation we call $l$ a \emph{discharged leaf}, and $u$ its
\emph{companion}; we write $c$ for the function that maps a discharged
leaf $l$ to its companion $c(l)$ and write $p(l)$ for the path in $T$ from $c(l)$ to $l$. 
\end{enumerate}

\end{definition}
A derivation $\pi' = (T',P',\sfS',\sfR', \sff')$ is a \emph{subderivation} of $\pi = (T,P,\sfS,\sfR, \sff)$ if $(T',P')$ is a subtree of $(T,P)$ and $\sfS',\sfR', \sff'$ and $\sfS,\sfR, \sff$ are equal on $(T',P')$.
A derivation $\pi$ is called \emph{regular} if it has finitely many distinct subderivations.
	
\begin{definition}
Let $\pi = (T,P,\sfS,\sfR, \sff)$ be a derivation. We define two graphs we are interested 
in: (i) The usual \emph{proof tree} $\calT_\pi = (T,P)$ and (ii) the \emph{proof tree
with back edges} $\calT_\pi^C = (T,P^C)$, where $P^C = P \cup \{(l,c(l))\mid l 
\text{ is a discharged leaf}\}$ is the parent relation plus back-edges for every 
discharged leaf.
	
A \emph{strongly connected subgraph} in $\calT_\pi^C$ is a set $S$ of nodes, such
that for every $u,v \in S$ there is a $P^{C}$-path from $u$ to $v$.
\end{definition}

\paragraph{The \NW proof system}
The rules of the derivation system \NW, which is directly based on the tableau games introduced by Niwi\'{n}ski \& Walukiewicz \cite{Niwinski1996}, are given in Figure~\ref{fig.NWrules}.

\begin{figure}[thb]
	\begin{minipage}{\textwidth}
		\begin{minipage}{0.20\textwidth}
			\begin{prooftree}
				\hypo{\phantom{X}}
				\infer[left label= \AxLit]1{p, \bar{p}, \Gamma}
			\end{prooftree}
		\end{minipage}
		\begin{minipage}{0.20\textwidth}
			\begin{prooftree}
				\hypo{\phantom{X}}
				\infer[left label= \AxTop]1{\top, \Gamma}
			\end{prooftree}
		\end{minipage}
		\begin{minipage}{0.24\textwidth}
			\begin{prooftree}
				\hypo{\varphi,\psi,\Gamma}
				\infer[left label = \RuOr]1{\varphi \lor \psi,\Gamma}
			\end{prooftree}
		\end{minipage}
		\begin{minipage}{0.30\textwidth}
			\begin{prooftree}
				\hypo{\varphi, \Gamma}
				\hypo{\psi,\Gamma}
				\infer[left label = \RuAnd]2{\varphi \land \psi,\Gamma}
			\end{prooftree}
		\end{minipage}
	\end{minipage}
	
	\bigskip
	
	\begin{minipage}{\textwidth}
		\begin{minipage}{0.30\textwidth}
			\begin{prooftree}
				\hypo{ \varphi,\Gamma}
				\infer[left label= \RuBox]1[]{ \Box \varphi, 	\Diamond \Gamma, \Delta}
			\end{prooftree}
		\end{minipage}
		\begin{minipage}{0.30\textwidth}
			\begin{prooftree}
				\hypo{\varphi[\mu x . \varphi / x], \Gamma}
				\infer[left label = \RuMu]1{\mu x . \varphi, \Gamma}
			\end{prooftree}
		\end{minipage}
		\begin{minipage}{0.22\textwidth}
			\begin{prooftree}
				\hypo{\varphi[\nu x . \varphi / x], \Gamma}
				\infer[left label = \RuNu]1{\nu x . \varphi, \Gamma}
			\end{prooftree}
		\end{minipage}
	\end{minipage}
	
	\caption{Rules of \NW}
	\label{fig.NWrules}
\end{figure}

In order to decide whether an \NW derivation qualifies as a proper \emph{proof}, 
one has to keep track of the development of individual formulas along infinite 
branches of the proofs.

\begin{definition}
Let $\Gamma,\Gamma'$ be sequents, $\xi$ be a formula such that $\Gamma$ is the
conclusion and $\Gamma'$ is a premise of a rule in Figure \ref{fig.NWrules} 
with principal formula $\xi$. 
We define the \emph{active} and \emph{passive trail relation}
$\sfA_{\Gamma,\xi,\Gamma'}, \sfP_{\Gamma,\xi,\Gamma'} \subseteq 
\Gamma \times \Gamma'$. 
Both relations are defined via a case distinction on $\xi$:
	
	\emph{Case $\xi = \Box\phi$:} Then $\Gamma = \Box\phi, \Diamond \Lambda, \Delta$ and $\Gamma' =
	\varphi, \Lambda$. We define $\sfA_{\Gamma,\xi,\Gamma'}, 
	= \{(\Box \phi,\phi)\} \cup \{(\Diamond \chi, \chi) \mid \chi \in
	\Lambda\}$ and $\sfP_{\Gamma,\xi,\Gamma'} = \varnothing$.
	
	\emph{Case $\xi = \varphi \lor \psi$:} Then $\Gamma = \varphi \lor \psi, \Lambda$ and $\Gamma' = \varphi,\psi, \Lambda$. We define $\sfA_{\Gamma,\xi,\Gamma'} = \{(\varphi \lor
	\psi,\varphi),(\varphi \lor \psi,\psi)\}$ and $\sfP_{\Gamma,\xi,\Gamma'} = \{(\chi,\chi) \mid \chi \in
	\Lambda\}$.

	
The relations for the remaining rules are defined analogously.
	
The \emph{trail relation} $\sfT_{\Gamma,\xi,\Gamma'}\subseteq \Gamma \times 
\Gamma'$ is defined as $\sfA_{\Gamma,\xi,\Gamma'} \cup \sfP_{\Gamma,\xi,\Gamma'}$.
Finally, for nodes $u,v$ in an \NW proof $\pi$, such that $P(u,v)$, we define 
$\sfT_{u,v} = \sfT_{\sfS(u),\sff(u),\sfS(v)}$
\end{definition}

Note that for any two nodes $u,v$ with $P(u,v)$ and $(\phi,\psi) \in \sfT_{u,v}$,
we have either $(\phi,\psi) \in \sfA_{{u,v}}$ and $\phi \tracestep \psi$, 
or else $(\phi,\psi) \in \sfP_{{u,v}}$ and $\phi = \psi$.
The idea is that $\sfA$ connects the active formulas in the premise and 
conclusion, whereas $\sfP$ connects the side formulas. 

\begin{definition}
Let $\pi = (T,P,\sfS,\sfR, \sff)$ be an \NW derivation. 
A \emph{branch} of $\pi$ is simply a (finite or infinite) branch of the underlying 
tree $(T,P)$ of $\pi$.
	%
A \emph{trail} on a branch $\alpha = (v_{n})_{n<\kappa}$ is a sequence 
$\tau = (\phi_{n})_{n<\kappa}$ of formulas such that $(\phi_{i},\phi_{i+1}) \in 
\sfT_{v_i,v_{i+1}}$, whenever $i+1 < \kappa$.
We obtain the \emph{tightening} $\widehat{\tau}$ of such a $\tau$ by erasing
all $\phi_{i+1}$ from $\tau$ for which $(\phi_{i},\phi_{i+1})$ belongs to the
passive trail relation $\sfP_{v_{i},v_{i+1}}$.
We call $\tau$ a \emph{$\nu$-trail} if its tightening $\widehat{\tau}$ is 
a $\nu$-trace (and so, in particular, it is infinite).
\end{definition}



\begin{definition}
An \emph{\NW proof} $\pi$ is an \NW derivation such that on every infinite branch of $\pi$ there is a $\nu$-trail. We write $\NW \vdash \Gamma$ if there is an \NW proof of $\Gamma$, i.e., an \NW proof, where $\Gamma$ is the sequent at the root of the proof. 
\end{definition}

Soundness and Completeness of \NW for guarded formulas, (i.e., where in every
subformula $\eta x . \psi$ all free occurrences of $x$ in $\psi$ are in the 
scope of a modality) follows from the results by Niwi\'{n}ski \&
Walukiewicz \cite{Niwinski1996}. 
As pointed out in \cite{Afshari2017}, it follows from \cite{Studer2008} and
\cite{Friedmann2013} that the result in fact holds for arbitrary formulas.
By Theorem~6.3 in \cite{Niwinski1996}, \NW-proofs can be assumed to be regular,
and this observation applies to unguarded formulas as well.

\begin{theorem}[Soundness \& Completeness]\label{thm.NWSoundCompleteness}
Let $\Gamma$ be a sequent, then $\bigvee \Gamma$ is valid iff 
$\NW \proves \Gamma$ iff
$\Gamma$ has a regular \NW-proof.
\end{theorem}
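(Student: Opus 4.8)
The plan is to derive the theorem from the known adequacy of the Niwi\'nski--Walukiewicz tableau game, via the correspondence between $\NW$-derivations and game strategies announced in the introduction. Concretely, I would first make precise the dictionary that identifies an $\NW$-derivation of $\Gamma$ with a (tree-shaped) strategy for the relevant player in the game $\game{\Gamma}$ associated with $\Gamma$, in such a way that $\NW$-proofs --- derivations whose every infinite branch carries a $\nu$-trail --- correspond exactly to \emph{winning} strategies; this works because the winning condition for infinite plays of $\game{\Gamma}$ is, by construction, the existence of such a $\nu$-trail (equivalently: the formula that the opposing player ``chases'' along the branch fails to generate a $\mu$-trace). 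Under this dictionary one has $\NW \proves \Gamma$ iff the relevant player wins $\game{\Gamma}$, and a \emph{regular} $\NW$-proof corresponds to a finite-memory, indeed positional, winning strategy.

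It then suffices to establish the cycle of implications. The implication ``$\Gamma$ has a regular $\NW$-proof $\Rightarrow$ $\NW \proves \Gamma$'' is immediate, since a regular proof is in particular a proof. The implications ``$\NW \proves \Gamma \Rightarrow \bigvee\Gamma$ valid'' (soundness) and ``$\bigvee\Gamma$ valid $\Rightarrow \NW \proves \Gamma$'' (completeness) are, for guarded $\Gamma$, exactly the soundness and completeness of the tableau game proved in \cite{Niwinski1996}, transported along the dictionary above. For arbitrary $\Gamma$ I would invoke the observation recorded in \cite{Afshari2017} that, combining \cite{Studer2008} with \cite{Friedmann2013}, these two directions survive the dropping of the guardedness assumption; the only delicate point is that an infinite trace may now ``stall'' on a fixpoint without ever passing a modality, but such a trace still has a well-defined dominating fixpoint (as recalled in the preliminaries), so the $\nu$-trail condition remains meaningful and the arguments of the cited works go through.

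For the remaining implication ``$\bigvee\Gamma$ valid $\Rightarrow \Gamma$ has a regular $\NW$-proof'' I would add a regularity argument. The game $\game{\Gamma}$ is played on a \emph{finite} arena: its positions are pairs consisting of a sequent (a subset of the finite set $\Clos(\Gamma)$) together with at most one marked formula, and its winning condition is a parity condition, namely the parity of $\min$ over the values of $\Omega$ on the fixpoint formulas through which the marked trace unfolds infinitely often, where $\Omega$ is the parity function on $\Fix(\Gamma)$ fixed in the preliminaries. By positional determinacy of parity games, if the relevant player wins $\game{\Gamma}$ then she has a positional winning strategy, whose tree unravelling yields an $\NW$-proof with only finitely many distinct subderivations, i.e.\ a regular one. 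Since $\Clos(\Gamma)$ is finite whether or not $\Gamma$ is guarded, this is exactly the observation attributed to \cite[Theorem~6.3]{Niwinski1996}, and the regularity conclusion applies to unguarded $\Gamma$ as well. Chaining the three implications closes the equivalence.

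The main obstacle is not in any one of these steps taken in isolation --- each is either routine bookkeeping or a direct appeal to the literature --- but in the unguarded case, where one must check that the tableau game, the trace/trail machinery, and the resulting parity condition still behave correctly when fixpoints can be unfolded without consuming a modality; this is precisely the content of \cite{Studer2008}, \cite{Friedmann2013} and the synthesis in \cite{Afshari2017}, on which I would rely rather than reprove from scratch.
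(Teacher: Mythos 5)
Your proposal matches the paper's treatment: the paper gives no self-contained proof but justifies the theorem exactly as you do, citing Niwi\'nski \& Walukiewicz for soundness and completeness in the guarded case, the combination of Studer and Friedmann--Lange as observed by Afshari \& Leigh for lifting the guardedness restriction, and Theorem~6.3 of Niwi\'nski \& Walukiewicz (i.e.\ the finite-arena/positional-strategy argument you sketch) for regularity. The extra detail you supply about the derivation--strategy dictionary and positional determinacy is a faithful unpacking of those citations rather than a different route.
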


%
%

\section{Determinization of automata with binary trees}\label{sec.det}

\subsection{Büchi automata}
Let $\Sigma$ be an alphabet and $\bbB =  \langle B, {\Delta}, b_I, F \rangle$ 
a nondeterministic Büchi automaton over $\Sigma$. 
We want to present an equivalent deterministic Rabin automaton.  

The \emph{run tree} of $\bbB$ on a word $w = (w_i)_{i \in \omega}$ is a pair $\sfR = (R,l)$, where $R$
is the full infinite binary tree and $l$ labels every node $s$ with $B_s \subseteq B$,
such that $l(\epsilon) = \{b_I\}$ and for $|s| = i$: $l(s1) = \Delta(B_s,w_i)
\cap F$ and $l(s0) = \Delta(B_s,w_i) \cap \overline{F}$, where we define $\Delta(B_s,y) = \bigcup_{b\in B_s} \Delta(b,y)$. 
It describes all possible runs of $\bbB$ on $w$, using the 1s to keep track of
visited states in $F$.

The \emph{profile tree}, introduced in \cite{Fogarty2013}, is a pruned version
of the run tree, where 1) at each level all but the (lexicographically) greatest 
occurrence of a state $b$ are removed and 2) nodes labelled by the empty set are 
deleted. 
This results in a tree of bounded width, where every node has $0$,$1$ or $2$ 
children.
It can be proved that $\bbB$ accepts a stream $w$ iff the corresponding profile
tree has a branch with infinitely many 1s.

In \cite{Fogarty2015} a determinization method was defined, where macrostates
encode levels of the profile tree. 
In our approach macrostates encode a compressed version of the whole profile
tree up to some level: Nodes $u$, $v$ are identified (iteratively), if $v$ is
the unique child of $u$. This results in finite binary trees, where leaves
are labelled by subsets of $B$. In every step of the transition function
we add one level of the run tree and then prune and compress the tree to
obtain a binary tree again. 
Whenever a $1$ is compressed (in the sense of a node being identified with its right child) we know that a state in $F$ has been visited and mark the node green. A run of the deterministic automaton is accepted if there is a node, which never gets removed and is marked green infinitely often. 
Figure~\ref{fig:determinization} contains an example of this determinization 
construction.

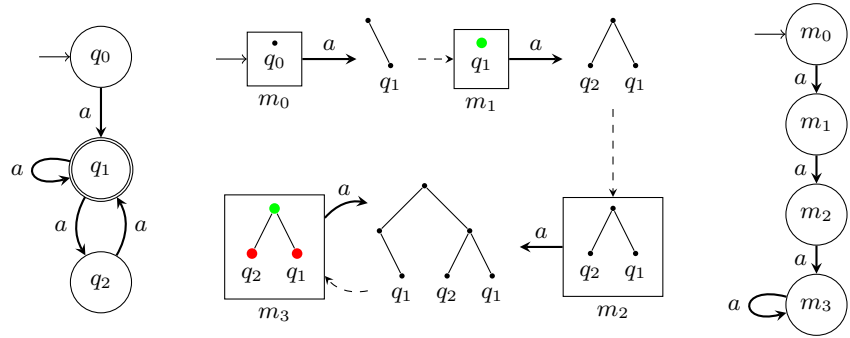
\begin{figure}[t]
\begin{subfigure}[c]{0.23\linewidth}
\begin{center}
	\begin{tikzpicture} [node distance = 1.5cm, 
		on grid, 
		auto,
		every loop/.style={stealth-}]
		
		\node (q0) [state, 
		initial, 
		initial text = {}] {$q_0$};
		
		\node (q1) [state,
		accepting,
		below = of q0] {$q_1$};
		
		\node (q2) [state,
		below = of q1] {$q_2$};
		
		\path [-stealth, thick]
		(q0) edge[left] node {$a$}   (q1)
		(q1) edge[bend right,left] node {$a$}   (q2)
		(q2) edge[bend right,right] node {$a$}   (q1)
		(q1) edge [loop left]  node {$a$}();
	\end{tikzpicture}	
\end{center}
\end{subfigure}
\begin{subfigure}[c]{0.52\linewidth}
\begin{center}
\begin{tikzpicture}

\node [draw, rectangle, initial, initial text={}, label=below:$m_0$] (init) at (0,0) {
\begin{tikzpicture}
  \coordinate
  {node [circle,fill=black,inner sep=0pt,minimum
size=2pt,label=below:$q_0$] {}} ;
\end{tikzpicture}
};

\node [] (initToFirst) at (1.5,0) {
\begin{tikzpicture}
  [level distance=6mm,
   level 1/.style={sibling distance=6mm}]
  \node [circle,fill=black,inner sep=0pt,minimum size=2pt] {}
     child [missing]
     child {node [circle,fill=black,inner sep=0pt,minimum size=2pt,label=below:$q_1$] {}};
\end{tikzpicture}
};

\node [draw, rectangle, label=below:$m_1$] (first) at (2.75,0) {
\begin{tikzpicture}
  \coordinate
  {node [circle,fill=green,inner sep=0pt,minimum
size=4pt,label=below:$q_1$] {}} ;
\end{tikzpicture}
};

\node [] (firstToSecond) at (4.5,0) {
\begin{tikzpicture}
  [level distance=6mm,
   level 1/.style={sibling distance=6mm}]
  \node [circle,fill=black,inner sep=0pt,minimum size=2pt] {}
     child {node [circle,fill=black,inner sep=0pt,minimum size=2pt,label=below:$q_2$] {}}
     child {node [circle,fill=black,inner sep=0pt,minimum size=2pt,label=below:$q_1$] {}};
\end{tikzpicture}
};

\node [draw, rectangle, label=below:$m_2$] (second) at (4.5,-2.5) {
\begin{tikzpicture}
  [level distance=6mm,
   level 1/.style={sibling distance=6mm}]
  \node [circle,fill=black,inner sep=0pt,minimum size=2pt] {}
     child {node [circle,fill=black,inner sep=0pt,minimum size=2pt,label=below:$q_2$] {}}
     child {node [circle,fill=black,inner sep=0pt,minimum size=2pt,label=below:$q_1$] {}};
\end{tikzpicture}
};

\node [] (preThird) at (2.25,-2.5) {
\begin{tikzpicture}
  [level distance=6mm,
   level 1/.style={sibling distance=12mm},
   level 2/.style={sibling distance=6mm}]
  \node [circle,fill=black,inner sep=0pt,minimum size=2pt] {}
     child {node [circle,fill=black,inner sep=0pt,minimum size=2pt] {}
       child [missing]
       child {node [circle,fill=black,inner sep=0pt,minimum
size=2pt,label=below:$q_1$] {}}
     }
     child {node [circle,fill=black,inner sep=0pt,minimum size=2pt] {}
       child {node [circle,fill=black,inner sep=0pt,minimum size=2pt,label=below:$q_2$] {}}
       child {node [circle,fill=black,inner sep=0pt,minimum size=2pt,label=below:$q_1$] {}}
     };
\end{tikzpicture}
};

\node [draw, rectangle, label=below:$m_3$] (third) at (0,-2.5) {
\begin{tikzpicture}
  [level distance=6mm,
   level 1/.style={sibling distance=6mm}]
  \node [circle,fill=green,inner sep=0pt,minimum size=4pt] {}
     child {node [circle,fill=red,inner sep=0pt,minimum size=4pt,label=below:$q_2$] {}}
     child {node [circle,fill=red,inner sep=0pt,minimum size=4pt,label=below:$q_1$] {}};
\end{tikzpicture}
};
		
		\path [-stealth, thick]
		(init) edge [above] node {$a$} (initToFirst)
		(first) edge [above] node {$a$} (firstToSecond)
		(second) edge [above] node {$a$} (preThird)
		(third) edge [above, bend left]  node {$a$} (preThird);

\path [-stealth, dashed]
 (initToFirst) edge (first)
 (firstToSecond) edge (second)
 (preThird) edge[bend left] (third);
\end{tikzpicture}	
\end{center}
\end{subfigure}
\begin{subfigure}[c]{0.23\linewidth}
\begin{center}
	\begin{tikzpicture} [node distance = 1.2cm, 
		on grid, 
		auto,
		every loop/.style={stealth-}]
		
		\node (m0) [state, 
		initial, 
		initial text = {}] {$m_0$};
		
		\node (m1) [state,
		below = of m0] {$m_1$};
		
		\node (m2) [state,
		below = of m1] {$m_2$};

		\node (m3) [state,
		below = of m2] {$m_3$};
		
		\path [-stealth, thick]
		(m0) edge[left] node {$a$}   (m1)
		(m1) edge[left] node {$a$}   (m2)
		(m2) edge[left] node {$a$}   (m3)
		(m3) edge[loop left] node {$a$}(m3);
	\end{tikzpicture}	
\end{center}
\end{subfigure}
\caption{A nondeterministic B\"{u}chi automaton $\bbB$ on the left and
its determinization $\bbB^D$ on the right. The diagram in the middle
shows the internal structure of the macrostates $m_0$, $m_1$, $m_2$ and
$m_3$ of $\bbB^D$. Binary trees are represented in the obvious way
(i.e., the root is the string $\epsilon$, and for every node the left
child appends $0$ and the right child appends $1$). The transitions of
$\bbB^D$ are split in two parts: In the first part one level of the run
tree is added, corresponding to the steps 1 and 2 in the definition of
the transition function. In the second part (the dashed arrows) the tree
is pruned and compressed, corresponding to the steps 3 and 4. The
acceptance condition of $\bbB^D$ is such that the word $a^\omega$ is
accepted by $\bbB^D$ because the string $\epsilon$ is always in play,
marked green infinitely often and never red.}
\label{fig:determinization}
\end{figure}

Formally we define the deterministic Rabin automaton $\bbB^D = \langle B^D, 
\delta, b'_I, R \rangle$ as follows: 
An element $S$ in the carrier $B^D$ of $\bbB^D$ is called a \emph{macrostate} 
and consists of 
\begin{itemize}
	\item a subset $B_S$ of $B$,
	\item a map $f: B_S \rightarrow 2^*$, such that\footnote{Here $\ran(f)$ denotes the co-domain of $f$.} $\ran(f)$ is a set of leaves of a binary tree and 
	\item a colouring map $c: \tree(\ran(f)) \rightarrow \{\green, \red, \white\}$.
\end{itemize}
We define $T^S$ to be the binary tree $\tree(\ran(f))$, that has $\ran(f)$ as its leaves and say that 
a binary string $s$ is \emph{in play} if $s \in T^S$. 
If it is clear from the context we occasionally abbreviate $T^S$ by $T$.
We will sometimes denote a macrostate by a set of pairs $(b,s)$, usually
written as $b^s$, where $b \in B_S$ and $s = f(b)$ and deal with the colouring $c$ implicitly. 

The initial macrostate $b'_I$ consists of the singleton $\{b_I^{\epsilon}\}$, 
where $c(\epsilon) = \white$. 
To define the transition function $\delta$ let $S$ be in $B^D$ and $y \in \Sigma$. 
We define $\delta(S,y)= S'$, where starting from the empty set we build up $S'$
in the following steps:
\begin{enumerate}
\item \underline{Move:} For every $a^{s} \in S$ and $b \in \Delta(a,y)$, add $b^{s}$ to $S'$.
\item \underline{Append:} For every $a^s \in S'$, where $a \notin F$, change $a^s$ to $a^{s0}$. For every $a^s \in S'$, where $a \in F$, change $a^s$ to $a^{s1}$.
\item \underline{Resolve:} 
If $a^{s}$ and $a^{t}$ are in $S'$, where $s<t$, delete $a^{s}$. 
\item \underline{Compress/Colour:} 
Let $c(t) = \white$ for every $t \in T^{S'}$. 
Now we compress and colour $T$ in the following way, until there exists no
\emph{witness} $t \in T$, such that (a) or (b) is 
applicable:\footnote{%
     As shown in Proposition \ref{prop.deltaProp} in the appendix this procedure does not depend on the order in which 
	 witnesses are chosen, and thus produces a unique binary tree.}
	 
\begin{enumerate}
\item For any $t \in T$, such that $t0 \in T$ and $t1 \notin T$, change every $a^s \in S'$, where $t0 \sqsubseteq s$, to $a^{s\unsubst{t0}{t}}$. For any $s \in T$, where $t \sqsubset s$, let $c(s) = \red$. 
\item For any $t \in T$, such that $t0 \notin T$ and $t1 \in T$, change every $a^s \in S'$, where $t1 \sqsubseteq s$, to $a^{s\unsubst{t1}{t}}$. For any $s \in T$ such that $t = s0\cdots 0$, let $c(s) = \green$, if $c(s) \neq \red$. In particular let $c(t) = \green$ if $c(t) \neq \red$. For any $s \in T$, where $t \sqsubset s$, let $c(s) = \red$. 
\end{enumerate}
	
\end{enumerate}
We define $B^{D}$ as the set of macrostates that can be reached from $b'_{I}$
in this way.

A run of $\bbB^D$ is accepting if there is a binary string $s$, which is in play
cofinitely often such that $c(s)$ is $\green$ infinitely often and $\red$ only 
finitely often.


\begin{theorem}\label{thm.CorrectnessNBtoDR}
	$\bbB$ accepts a word $w$ iff $\bbB^D$ accepts $w$.
\end{theorem}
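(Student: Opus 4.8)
The plan is to establish the two directions separately, in both cases relating a run of $\bbB^D$ on $w$ to the structure of the run tree $\sfR = (R,l)$ of $\bbB$ on $w$. The central bookkeeping device is a tracking of how positions in the (repeatedly pruned and compressed) binary trees $T^{S_0}, T^{S_1}, \dots$ appearing along the run of $\bbB^D$ correspond to nodes of the run tree. First I would fix, for a run $S_0 \xrightarrow{w_0} S_1 \xrightarrow{w_1} \cdots$ of $\bbB^D$, the composite maps that send each macrostate $S_{i}$, together with its binary tree $T^{S_i}$, back into level $i$ of the run tree: since the Move and Append steps just mimic one level of $\sfR$, and the Resolve and Compress/Colour steps only delete nodes or merge a node with its unique child (performing a substitution $s \mapsto s\unsubst{t0}{t}$ or $s \mapsto s\unsubst{t1}{t}$ on the strings below), there is a natural partial function $g_i$ from strings in play in $S_i$ to nodes of $R$ at level $i$, and the key invariant is that $b \in l(g_i(f_i(b)))$ for every $b^{s} \in S_i$, and moreover that $\sqsubseteq$ on in-play strings refines the ancestor relation on the corresponding run-tree nodes. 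I would prove this invariant by induction on $i$, checking that it is preserved by each of the four steps; the colour $\green$ gets placed on a string $s$ exactly when, since $s$ last was recoloured, the run-tree branch tracked below $s$ has passed through a state in $F$ (this is the content of the ``$1$ being compressed'' intuition), and $\red$ records that the node tracked by $s$ has lost descendants on its left.

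For the direction ``$\bbB^D$ accepts $w$ $\Rightarrow$ $\bbB$ accepts $w$'': suppose $s$ is a string that is in play cofinitely often, green infinitely often, red only finitely often. After the last time $s$ turns red, $s$ is never again recoloured red, so the node $g_i(s)$ of the run tree stabilises along a single infinite branch $\beta$ of $R$ — more precisely, the sequence $g_i(s)$ (for large $i$ with $s$ in play) is $\sqsubseteq$-monotone and its limit is an infinite branch of $R$. Each time $s$ turns green after that point, step (b) guarantees that a child contributing a state in $F$ was absorbed into the subtree tracked by $s$, which forces $\beta$ to contain infinitely many nodes labelled by states in $F$; reading off a run of $\bbB$ along $\beta$ and noting that a node $s1$ of the run tree carries states in $F$ (by the definition of $l(s1) = \Delta(B_s,w_i) \cap F$), this yields a run of $\bbB$ on $w$ hitting $F$ infinitely often, hence acceptance.

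For the converse ``$\bbB$ accepts $w$ $\Rightarrow$ $\bbB^D$ accepts $w$'': fix an accepting run $\rho = b_0 b_1 b_2 \cdots$ of $\bbB$ on $w$, with $b_j \in F$ for infinitely many $j$. This run picks out an infinite branch of the run tree, and by the standard profile-tree argument (every level has bounded width after pruning, so by König's lemma the ``shadow'' of $\rho$ in the profile/compressed tree stabilises), there is a $\sqsubseteq$-limit string $s^{*} \in 2^{*}$ that is in play in cofinitely many $S_i$ and is never removed after some point — removal of $s$ happens only via Resolve (a state duplicated with a lexicographically smaller copy) or via being made a proper descendant of a compressed witness, and one argues that the branch followed by $\rho$ is never the one that gets discarded, because $\rho$ itself certifies the surviving (lexicographically greatest, non-empty) option. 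Since $\rho$ visits $F$ infinitely often and each such visit corresponds to appending a $1$ that is eventually compressed into $s^{*}$ (or one of its ancestors that later merges with $s^{*}$), step (b) colours $s^{*}$ green infinitely often; and after $s^{*}$ stabilises it is never coloured red. Hence $s^{*}$ witnesses acceptance of $w$ by $\bbB^D$.

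The main obstacle I expect is the second direction, and within it the precise argument that the string $s^{*}$ tracking the accepting run is eventually never deleted and eventually never coloured red. This requires a careful analysis of the interaction between the Resolve step (which can delete a string because a copy of the same state sits at a lexicographically larger position) and the Compress step (which can relabel $s^{*}$ itself), and it is exactly here that the lexicographic ordering conventions and the ``keep the greatest occurrence'' rule from the profile-tree construction do the real work; I would isolate this as a lemma stating that along the branch determined by any run of $\bbB$, there is a well-defined limit string whose colouring from some point on reflects the $F$-visits of that run, with no spurious reds. The other steps — the invariant relating macrostates to levels of the run tree, and the extraction of a $\bbB$-run in the first direction — I expect to be routine inductions once the correspondence maps $g_i$ are set up, together with the well-definedness of $\delta$ from Proposition \ref{prop.deltaProp}.
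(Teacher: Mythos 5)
Your overall architecture matches the paper's: relate macrostates to the run tree, track a distinguished string, and read the colours green/red as certifying $F$-visits and deletions. You also correctly isolate one of the two delicate points, namely that in the direction ``$\bbB$ accepts $\Rightarrow$ $\bbB^D$ accepts'' one must argue that the string shadowing the accepting run survives \underline{Resolve} and \underline{Compress}; the paper handles this by taking $t$ to be the \emph{maximal} string that is a substring of cofinitely many of the annotations $s_n$ of the run and is red only finitely often (such a $t$ exists because $\epsilon$ qualifies), and then runs a three-case analysis on the maximal $t'$ of the form $t0\cdots0$ with $t' \sqsubseteq s_n$ cofinitely often.

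The genuine gap is in the direction you declare routine, ``$\bbB^D$ accepts $\Rightarrow$ $\bbB$ accepts.'' You claim that the run-tree positions $g_i(s)$ of the accepting string $s$ converge to an infinite branch $\beta$ of the run tree and that each green event ``forces $\beta$ to contain infinitely many nodes labelled by states in $F$.'' This does not follow: a green colouring of $s$ at stage $i$ is witnessed by a compression event somewhere in the subtree below $s$, so it certifies that \emph{some} partial run annotated below $s$ passed through $F$ at that stage --- but different green events may be certified by different, mutually incompatible partial runs, none of which need extend to a single infinite run. To close this you need the paper's construction: form the tree of all pairs $(a^{s_a},i)$ with $t \sqsubseteq s_a$, attach each node to a lexicographically maximal predecessor, apply K\"onig's Lemma, and crucially select the \emph{lexicographically minimal} infinite branch. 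Then one argues by contradiction: if that branch stops visiting $F$, only $0$s are appended to its annotation, so the annotation eventually collapses to the minimal leaf $t0\cdots0$ below $t$ (using minimality of the branch to kill everything to its left), and at that point the next green event at $t$ must be witnessed in step 4(b) by this very branch, forcing an $F$-visit after all. Without the minimal-branch selection and this collapse argument, the extraction of a single accepting run of $\bbB$ is not justified.
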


\begin{remark}
For a Büchi automaton of $n$ states, our construction yields a deterministic
automaton $\bbB^D$ with $n^{\calO(n)}$ states and a Rabin condition of 
$\calO(2^n)$ pairs, see Lemma \ref{lem.complexityNBtoDR} in the appendix.
With some adaptations we could also match the optimal Rabin condition, which
is known to be linear-size~\cite{Safra1988}.

 This can be achieved by adding an labelling function as follows: Let $L = \{1,...,2n-1\}$ be the set of potential labels. Macrostates are defined as before, where an additional injective function $l: T^S \rightarrow L$ is added. For the initial state we let $l(\epsilon)=1$. The steps 1 - 4 in the transition function remain the same, where we add a final step 5 in which we define the new labeling function $l'$: We let $l'(s) = l(s)$ for all $s$ that already occurred in $T^{S}$ and for all $s \in T^{S'}\setminus T^S$ we let $c(s) = \red$ and choose new, distinct labels in $L$, i.e. ones which do not occur in $\ran(l)$. The binary tree $T^{S'}$ has at most $n$ leaves, hence it has at most $2n -1$ many nodes and this is always possible. 
	
	The new acceptance condition has the following form: A run of the automaton is accepting if there is a label $k \in L$, such that $c(l^{-1}(k))$ is $\green$ infinitely often and $\red$ only finitely often. Here $c(l^{-1}(k))$ is defined to be  $\red$ if $k \notin \ran(l)$. This is a Rabin condition with $\calO(n)$ pairs. Notably we still have $n^{\calO(n)}$ macrostates, thus the determination method is optimal.
\end{remark}

\subsection{Parity automata}\label{sec.sub.parityAut}

We now extend the approach to parity automata. 
Let $\Sigma$ be an alphabet and
$\bbA = \langle A, \Delta_A, a_I, \Omega \rangle$ be a nondeterministic parity 
automaton. 

In order to present the intuitive idea behind the construction we first 
transform $\bbA$ into an equivalent nondeterministic Büchi automaton $\bbB$. 
Let $m$ be the maximal even priority of $\Omega$. For even $k = 0,2,...m$ we define $\bbA_0,\bbA_2,...,\bbA_n$ as copies of $\bbA$ without the states of priority smaller than $k$, i.e. $\bbA_k = \langle A_k, \Delta_k, F_k \rangle$ with $A_k = \{a_k ~|~ a \in A \land  \Omega(a) \geq k\}$, $\Delta_k = \Delta_A|_{A_k}$ and $F_k = \{a_k \in A_k ~|~ \Omega(a) = k\}$. 
Now we define the nondeterministic Büchi automaton $\bbB =  \langle B, 
{\Delta_B}, b_I, F \rangle$:\footnote{%
   For easier notation we represent the transition function 
   $B\times\Sigma \rightarrow \powset(B)$ by its corresponding relation (i.e., 
   subset of $B\times\Sigma\times B$).}
\begin{align*}
	B = & A \cup \bigcup_{\substack{k=0\\k~ \text{even}}}^m A_k, \quad \quad\quad\quad \quad b_I = a_I, \quad \quad\quad \quad\quad F = \bigcup_{\substack{k=0\\k~ \text{even}}}^m F_k,\\ 
	{\Delta_B} = & \Delta_A \cup \bigcup_{\substack{k=0\\k~ \text{even}}}^m \Delta_k \cup \{(a,y,b_k) \in A \times \Sigma \times A_k ~|~ b \in \Delta_A(a,y), k = 0,2,...,m\}. 
\end{align*}
Although $\bbA_k$ is not an automaton, as it does not have an initial state, we can define the Büchi automaton $\bbA \cup \bbA_k = \langle A\cup A_k, \Delta_B|_{ A\cup A_k}, a_I, F_k \rangle$ for $k = 0,...,m$.

The intuition behind the determinization of the parity automaton $\bbA$
is the following: We apply the binary tree construction to every
automaton $\bbA \cup \bbA_k$ for $k=0,2,...,m$, which is possible as
there are no paths from $A_k$ to $A_j$ if $k \neq j$ and none of the
accepting states of $\bbB$ are in the set $A$.
The annotation of a state $a \in \bbA$ will then be the tuple $(s_0,s_2,...,s_m)$, where $s_k$ is the annotation at the state $a_k \in \bbA \cup \bbA_k$. 
Note that the automaton $\bbA^D$ will be different from the automaton obtained 
from the binary tree construction on the whole $\bbB$.

\medskip
To make that formal we need some definitions. 
A \emph{treetop} $L$ is a set of leaves of a binary tree, where potentially the 
minimal leaf is missing, i.e. $L$ is a finite set of binary strings such that 
for all $s\neq t \in L$ it holds $s \not\sqsubseteq t$ and $\tree(L) = \{s \in 2^* ~|~ \exists t \in L: s \sqsubseteq t \} \cup \{s0 \| s = 0\cdots0 \text{ and } s1 \in L\}$ is a binary tree.

For even $m$ let $\TSeq(m) = \{(s_0,s_{2},...,s_m) ~|~  s_0,s_{2},...,s_m \in 2^*\}$ be the set of sequences of length $\frac{m}{2}+1$, where $s_0,...,s_m$ are binary strings.
Let $\pi_k$ be the projection function, which maps $\sigma = (s_0,...,s_m)$ to $s_k$ for $k = 0,2,...,m$. 
We define a partial order $<$ on $\TSeq(m)$: Let $(s_0,...,s_m) < (t_0,...,t_m)$ if there exists $l \in \{0,...,m\}$ such that $s_l < t_l$ and $s_j = t_j$ for $j= 0,...,l-2$.

We now define the deterministic Rabin automaton $\bbA^D = \langle A^D, \delta_A, a'_I, R_A \rangle$. Let $m$ be the maximal even priority of $\Omega$ in $\bbA$.
An element $S$ in the carrier $A^D$ of $\bbA^D$ consists of a tuple $(A_S,f, c_0,...,c_m)$, where
\begin{itemize}
	\item  $A_S$ is a subset of $A$,
	\item $f: A_S \rightarrow \TSeq(m)$, such that $\ran(\pi_k \circ f)$ is a treetop for $k = 0,...,m$ and
	\item  $c_k$ is a colouring map from  $\tree(\ran(\pi_k \circ f)) \rightarrow \{\green, \red, \white\}$ for $k=0,2,...,m$.
\end{itemize}
We define $T^S_k$ to be the binary tree $\tree(\ran(\pi_k \circ f))$ for $k=0,2,...,m$ and say a binary string $s$ is \emph{in play at position} $k$ if $s \in T^S_k$. If the context is clear we will abbreviate $T^S_k$ with $T_k$.
Again we sometimes denote a macrostate by a set of pairs
$(a,\sigma)$, usually written as $a^{\sigma}$, where $a \in A_S$ and $\sigma = f(a)$ and deal with the colourings $c_k$ implicitly.

The initial macrostate $a'_I$ consists of the singleton $\{a_I^{(\epsilon,...,\epsilon)}\}$. To define the transition function $\delta_A$ let $S$ be in $A^D$ and $y \in \Sigma$. We define $\delta_A(S,y)= S'$, where $S'$ is constructed in the following steps:
\begin{enumerate}
\item 
\begin{enumerate}
\item \underline{Move:} For every $a^{\sigma} \in S$ and $b \in \Delta_A(a,y)$, add $b^{\sigma}$ to $S'$. 
\item \underline{Reduce:} For every $a^{\sigma} \in S'$, change $a^{\sigma}$ to $a^{\sigma'}$, where $\sigma'$ is obtained from $\sigma = (s_0,...,s_m)$ by replacing every $s_j$ with $j > \Omega(a)$ by $\minL(T_j)$.
\end{enumerate}
\item \underline{Append:} 
For every $a^{\sigma} \in S'$ and $\sigma = (s_0,...,s_m)$, change $a^{\sigma}$
to $a^{\sigma'}$, where 
$\sigma' = (s_00,...,s_{k-2}0,s_k1, s_{k+2}0,...,s_m0)$ if $\Omega(a) = k$ is 
even, and $\sigma' = (s_00,...,s_{m}0)$ if $\Omega(a) = k$ is odd.


\item \underline{Resolve:} If $a^{\sigma}$ and $a^{\tau}$ are in $S'$ and $\sigma < \tau$, delete $a^{\sigma}$. 
\item \underline{Compress/Colour:} Do for every $k = 0,2,...,m$: Let $c_k(t) = \white$ for any $t \in T_k$. Now we compress and colour $T_k$ inductively in the following way, until there exists no \emph{witness} $t \in T_k$, such that (a) or (b) is applicable:
\begin{enumerate}
		\item For any $t \in T_k$, such that $t0 \in T_k$ and $t1 \notin T_k$, change every $a^{\sigma} \in S'$, where $\sigma = (s_0,...,s_m)$, and $t0 \sqsubseteq s_k$, to $a^{\sigma'}$, where $\sigma' = (s_0,...,s_k\unsubst{t0}{t},...,s_m)$. For any $s \in T_k$, where $t \sqsubset s$, let $c_k(s) = \red$. 
		\item For any $t \in T_k$, such that $t0 \notin T_k$, $t1 \in T_k$ and $t \neq 0\cdots0$, change every $a^{\sigma} \in S'$, where $\sigma = (s_0,...s_m)$, and $t1 \sqsubseteq s_k$, to $a^{\sigma'}$, where $\sigma' = (s_0,...,s_k\unsubst{t1}{t},...,s_m)$. For any $s \in T_k$ such that $t = s0\cdots 0$, let $c_k(s) = \green$, if $c_k(s) \neq \red$. In particular let $c_k(t) = \green$ if $c_k(t) \neq \red$. For any $s \in T_k$, where $t \sqsubset s$, let $c_k(s) = \red$. 
\end{enumerate}

\end{enumerate}
A run of $\bbA^D$ is accepting if there is $k \in \{0,2,...,m\}$ and a binary
string $s$, which is in play at position $k$ cofinitely often such that $c_k(s)$
is $\green$ infinitely often and $\red$ only finitely often.

\begin{theorem}\label{thm.correctnessNPtoDR}
	Let $\bbA$ be a parity automaton and ${\bbA}^D$ the deterministic Rabin automaton defined from $\bbA$. Then $L(\bbA) = L({\bbA}^D)$.
\end{theorem}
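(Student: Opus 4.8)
The plan is to reduce the statement to the Büchi case, Theorem~\ref{thm.CorrectnessNBtoDR}, via the intermediate nondeterministic Büchi automaton $\bbB$ introduced above. The first step is the textbook equivalence $L(\bbA) = L(\bbB)$: an accepting run of $\bbA$ has an even minimal priority $k$ occurring infinitely often, so after a finite prefix it stays within the states of priority $\geq k$ and visits priority $k$ infinitely often; guessing the point of that switch and moving into the copy $\bbA_k$ produces an accepting run of $\bbB$. Conversely, since the copies $A_k$ are pairwise disconnected, cannot be left once entered, and contain all the accepting states, every accepting run of $\bbB$ is eventually confined to a single $A_k$ and visits $F_k$ infinitely often there; erasing the copy index yields an accepting run of $\bbA$ whose minimal infinitely recurring priority is exactly $k$. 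As a by-product we obtain $L(\bbB) = \bigcup_{k\text{ even}} L(\bbA\cup\bbA_k)$, and hence, applying Theorem~\ref{thm.CorrectnessNBtoDR} to each Büchi automaton $\bbA\cup\bbA_k$, that $L(\bbA) = \bigcup_{k\text{ even}} L\big((\bbA\cup\bbA_k)^D\big)$.

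It therefore suffices to prove, for every input $w$ and every even $k$, that the run of $\bbA^D$ on $w$ meets the acceptance condition \emph{via its $k$-th component} (some binary string in play at position $k$ cofinitely often, coloured $\green$ by $c_k$ infinitely often and $\red$ only finitely often) if and only if $w \in L(\bbA\cup\bbA_k)$. The intended correspondence is the one sketched in the paper: a pair $a^{\sigma}$ of $\bbA^D$ with $\pi_k(\sigma)=s$ plays the role of the pair $a_k^{s}$ of $(\bbA\cup\bbA_k)^D$ when $\Omega(a)\geq k$ (so that $a_k$ is a genuine state of $\bbA_k$, with $a_k\in F_k$ exactly when $\Omega(a)=k$), and the role of the $A$-copy pair $a^{s}$ when $\Omega(a)<k$. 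Under this reading the Move, Append, Resolve and Compress/Colour steps of $\delta_A$ restricted to component $k$ match the corresponding steps of the Büchi transition function on $\bbA\cup\bbA_k$ — in particular the $1$-append for priority-$k$ states is precisely the $F_k$-marking of $a_k$, and $T_k$, $c_k$ evolve by the same pruning and compression used in the Büchi construction. Granting this, the ``only if'' direction is obtained by reconstructing an accepting run of $\bbA\cup\bbA_k$ from a $k$-component witness string $s$ — each $\green$-marking of $s$ records the passage of some run-segment through the subtree at $s$ to an $F_k$-state, and a König's-lemma argument stitches these segments into an infinite run — and the ``if'' direction by tracking a fixed accepting run of $\bbA\cup\bbA_k$ through the macrostates of $\bbA^D$ (always passing to the $\sqsubseteq$-maximal surviving representative) and showing that its $k$-component annotations eventually stabilise on a branch that is $\green$ infinitely often and never $\red$. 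Both are parity-indexed analogues of the arguments behind Theorem~\ref{thm.CorrectnessNBtoDR}.

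The genuinely new ingredient, and the step I expect to be the main obstacle, is the \emph{Reduce} step, which for each $a^{\sigma}$ resets $s_j$ to $\minL(T_j)$ for all $j>\Omega(a)$. Its effect is that every state of priority $<k$ is permanently pinned to the minimal leaf of $T_k$, so $\bbA^D$ is not the plain parallel product of the automata $(\bbA\cup\bbA_k)^D$ but a compressed variant of it — and this is exactly why the $k$-th components live on \emph{treetops}, whose minimal leaf may be absent, rather than on honest binary trees. What has to be shown is that this collapse is harmless: a trace that visits a state of priority $l$ infinitely often has minimal infinitely recurring priority at most $l$, so it can witness acceptance only through a copy $\bbA_j$ with $j\le l$; resetting its component-$j$ position for $j>l$ at each such visit therefore only affects components where that trace is already irrelevant, while a state visited only finitely often triggers the reset only finitely often and so cannot change any component's limit behaviour. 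Making this precise — and checking that Reduce interacts correctly with Resolve and with simultaneous Compress/Colour operations at several priorities — is where the real care is needed; apart from it, the two directions above should transfer essentially verbatim from the Büchi proof. A fully self-contained alternative would be to redo the profile-tree argument directly for parity automata, building for each even $k$ a ``$k$-profile tree'' whose accepting branches are exactly the runs of $\bbA$ with minimal infinitely recurring priority $k$, but this duplicates work and I would prefer the reduction above.
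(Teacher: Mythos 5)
Your high-level decomposition is the same one the paper uses as \emph{intuition}, and the two directions you sketch at the end (a K\"onig's-lemma reconstruction of a run from a persistent witness string, and tracking a fixed accepting run through the macrostates via maximal surviving representatives) are indeed the arguments the paper runs. But the load-bearing step of your plan --- that the Move/Append/Resolve/Compress steps of $\delta_A$ ``restricted to component $k$'' match the B\"uchi construction on $\bbA\cup\bbA_k$, so that the two directions ``transfer essentially verbatim'' --- is left unproved, and as stated it is false, for reasons that go beyond the Reduce step you single out. First, Resolve in $\bbA^D$ deletes $a^{\sigma}$ when $\sigma<\tau$ in the order on $\TSeq(m)$, which is lexicographic with the \emph{lower-indexed} components dominating; so the surviving annotation of $a$ need not have the maximal $k$-th component, whereas $(\bbA\cup\bbA_k)^D$ would keep exactly the maximal string at position $k$. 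Second, in $\bbA\cup\bbA_k$ every state $a$ with $\Omega(a)\geq k$ exists in two copies ($a\in A$ and $a_k\in A_k$) carrying independent annotations, while $\bbA^D$ keeps a single annotation per state; your dichotomy ``$\Omega(a)\geq k$ means $A_k$-copy, $\Omega(a)<k$ means $A$-copy'' misassigns, e.g., a priority-$\geq k$ state sitting at $\minL(T_k)$, which is really playing the $A$-copy role --- this is precisely why component $k$ lives on a treetop with the minimal leaf special-cased (excluded as a witness in step 4(b), target of Reduce), not on an honest binary tree. So the $k$-th component of $\bbA^D$ is \emph{not} a run of $(\bbA\cup\bbA_k)^D$, and the biconditional you reduce to cannot be obtained by the claimed simulation.

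The paper avoids this entirely: it never determinizes $\bbA\cup\bbA_k$, but reruns the B\"uchi argument directly on $\bbA^D$ against runs of $\bbA$. For soundness of acceptance it takes an accepting run of $\bbA$, lets $k$ be its minimal infinitely recurring priority, shows the persistent witness $t$ in component $k$ satisfies $t\neq 0\cdots 0$, and then argues as in the B\"uchi case. For the converse it proves the extra statement that every $a^{\sigma}$ with $t\sqsubseteq\sigma_k$ has $\Omega(a)\geq k$ --- exactly because Reduce would otherwise reset $\sigma_k$ to $\minL(T_k)=0\cdots 0$ while $t\neq 0\cdots 0$ --- and then shows the reconstructed run hits priority exactly $k$ infinitely often. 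If you drop the detour through $(\bbA\cup\bbA_k)^D$ and carry out your two sketched arguments directly on $\bbA^D$, adding these two observations ($t\neq 0\cdots 0$ and the priority lower bound on the trace set), you arrive at the paper's proof; as written, the central lemma of your reduction remains a gap.
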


\begin{remark}\label{rem.complexityNPtoDR}
For a parity automaton $\bbA$ of size $n$ with highest even priority $m$, 
our construction produces a deterministic Rabin automaton with
$n^{\calO(m\cdot n)}$ macrostates and $\calO(m\cdot 2^n)$ Rabin pairs, see Lemma \ref{lem.complexNPtoDR} in the appendix.
\end{remark}


\section{\BT proofs}\label{sec.BTproofs}

\subsection{Proof systems}
We present two non-wellfounded proof systems for the modal $\mu$-calculus, namely \BT and \BTInf. The idea is that annotated sequents in the \BT system correspond to macrostates of $\bbA^D$, where $\bbA$ is a nondeterministic parity automaton checking the trace condition in an \NW proof. The rules of \BT resemble the transition function of $\bbA^D$.

Let $\Phi$ be a set of formulas, the sequent we want to
prove, and let $m= \max_\Omega(\Phi)$ be the maximal even priority of $\Omega$. \emph{Annotated sequents} are sets
of pairs $(\varphi,\sigma)$, usually written as $\varphi^{\sigma}$, where $\varphi \in \Clos(\Phi)$ and $\sigma \in \TSeq(m)$. For an annotated sequent $\Gamma$ we let $\Gamma^N$ be the set of annotations occurring in $\Gamma$, i.e. $\Gamma^N = \{\sigma \in \TSeq(m) \| \exists \varphi \text{ s.t. } \varphi^{\sigma} \in \Gamma\}$. We let $\Gamma_k^N$ be the set of binary strings occurring at the $k$-th position of the annotations in $\Gamma$, i.e., $\Gamma_k^N = \pi_k[\Gamma^N]$. We say that a string $s$ \emph{occurs in $\Gamma_k^N$} if there exists $t \in \Gamma_k^N$ such that $s \sqsubseteq t$.

 For $\sigma = (s_0,...,s_m)\in \TSeq(m)$ we define $\sigma\cdot 1_k = (s_0,...,s_k1,...,s_m)$ and $\sigma\cdot 0_k = (s_0,...,s_k0,...,s_m)$. For an annotated sequent $\Gamma$ we let $\Gamma^{\cdot 0_k}$ denote the annotated sequent $\{\varphi^{\sigma\cdot 0_k} \| \varphi^{\sigma} \in \Gamma\}$.

Let $\Gamma$ be an annotated sequent and $\varphi^{\sigma} \in \Gamma$.
We define $\sigma\upto k^{\Gamma}$ to be the tuple of binary strings
obtained from $\sigma = (s_0,...,s_m)$ by replacing every $s_j$ with $j
> k$ by $\minL(\tree(\Gamma_j^N)$. If the context $\Gamma$ is clear we write $\sigma\upto k$ instead of $\sigma\upto k^{\Gamma}$.

\begin{figure}[htb]
	\begin{minipage}{\textwidth}
		\begin{minipage}{0.22\textwidth}
			\begin{prooftree}
				\infer[left label=\AxLit:]0{ p^{\sigma},\bar{p}^{\tau}, \Gamma}
			\end{prooftree}
		\end{minipage}
		\begin{minipage}{0.18\textwidth}
			\begin{prooftree}
				\infer[left label=\AxTop:]0{ \top^\sigma, \Gamma}
			\end{prooftree}
		\end{minipage}
		\begin{minipage}{0.25\textwidth}
			\begin{prooftree}
				\hypo{ \varphi^{\sigma},\psi^{\sigma}, \Gamma}
				\infer[left label= \RuOr:]1{ (\varphi \lor \psi)^{\sigma}, \Gamma}
			\end{prooftree}
		\end{minipage}
		\begin{minipage}{0.27\textwidth}
			\begin{prooftree}
				\hypo{ \varphi^{\sigma},\Gamma\quad \psi^{\sigma}, \Gamma}
				\infer[left label= \RuAnd:]1{ (\varphi \land \psi)^{\sigma}, \Gamma}
			\end{prooftree}
		\end{minipage}
	\end{minipage}
	
	\bigskip
	
	\begin{minipage}{\textwidth}
		
		\begin{minipage}{0.40\textwidth}
			\begin{prooftree}
				\hypo{ \varphi\unsubst{x}{\mu x.\varphi}^{\sigma\upto \Omega(\mu x.\varphi)},\Gamma}
				\infer[left label=\RuMu:]1{ \mu x.\varphi^{\sigma}, \Gamma}
			\end{prooftree}
		\end{minipage}
		\begin{minipage}{0.30\textwidth}
			\begin{prooftree}
				\hypo{ \varphi\unsubst{x}{\nu x.\varphi}^{\sigma\upto k \cdot 1_{k}},\Gamma^{\cdot 	0_{k}}}
				\infer[left label=\RuNu:]1[~ where $k = \Omega(\nu x.\varphi)$]{ \nu x.\varphi^{\sigma}, \Gamma}
			\end{prooftree}
		\end{minipage}
	\end{minipage}
	
	\bigskip

	\begin{minipage}{\textwidth}
	\begin{minipage}{0.30\textwidth}
		\begin{prooftree}
			\hypo{ \varphi^{\sigma},\Gamma}
			\infer[left label= \RuBox:]1[]{ \Box \varphi^{\sigma}, \Diamond \Gamma, \Delta}
		\end{prooftree}
	\end{minipage}
		\begin{minipage}{0.50\textwidth}
			\begin{prooftree}
				\hypo{ \varphi^{\sigma}, \Gamma}
				\infer[left label=\RuResolve:]1[~ where $\sigma > \tau$]{ \varphi^{\sigma}, \varphi^{\tau},\Gamma}
			\end{prooftree}
		\end{minipage}
		\begin{minipage}{0.15\textwidth}
			\begin{prooftree}
				\hypo{~~[\Gamma]^{\dx}}
				\infer[no rule]1{\vdots}
				\infer[no rule]1{\Gamma}
				\infer[left label=\RuDischarge:]1{\Gamma}
			\end{prooftree}
		\end{minipage}
	\end{minipage}

	\bigskip
	
	\begin{minipage}{\textwidth}
		\begin{minipage}{0.20\textwidth}
			\begin{prooftree}
				\hypo{ \varphi_1^{(...,st_1,...)},...,\varphi_n^{(...,st_n,...)}, \Gamma}
				\infer[left label=\RuCompress[k][s0]:]1[~ where $s$ does not occur in $\Gamma_k^N$]{ 	\varphi_1^{(...,s0t_1,...)},...,\varphi_n^{(...,s0t_n,...)}, \Gamma}
			\end{prooftree}
		\end{minipage}
	\end{minipage}
	
	\bigskip
	
	\begin{minipage}{\textwidth}
		\begin{minipage}{0.20\textwidth}
			\begin{prooftree}
				\hypo{ \varphi_1^{(...,st_1,...)},...,\varphi_n^{(...,st_n,...)}, \Gamma}
				\infer[left label=\RuCompress[k][s1]:]1[~ where $s$ does not occur in $\Gamma_k^N$ and $s\neq 	0\cdots0$]{ \varphi_1^{(...,s1t_1,...)},...,\varphi_n^{(...,s1t_n,...)}, \Gamma}
			\end{prooftree}
		\end{minipage}
	\end{minipage}
	
	\caption{Rules of \BT}
	\label{fig.BTrules}
\end{figure}

The rules \RuCompress[k][s0] and
\RuCompress[k][s1] are schemata for $k = 0,2,...,m$ and $s \in 2^*$. In
these rules the notation $\varphi_i^{(\dots,s t_i,\dots)}$ is to
be read such that $s t_i$ is the binary string in the $k$-th position
of the annotation. We will write \RuCompress for any of those rules and write \RuCompress[k][s] for either \RuCompress[k][s0] or \RuCompress[k][s1]. 

Note that, if one ignores the annotations, the rules \AxLit, \AxTop,
\RuOr, \RuAnd, \RuMu, \RuNu and \RuBox in Figure \ref{fig.BTrules} are
the same as the rules of \NW. 
As mentioned above annotated sequents in the \BT system correspond to macrostates of $\bbA^D$, where $\bbA$ is a nondeterministic parity automaton checking the trace condition in an \NW proof. The rules of \BT correspond to the transition function $\delta_A$ of $\bbA^D$, where the transformations of $\delta_A$ are distributed over multiple rules:
Step 1(a) of $\delta_A$ is carried out in every rule and step 1(b) and step 2 correspond to the modification of the annotations in the rules R$_\mu$ and R$_\nu$. Notably, we do not add zeros to the annotations if the zeros would get deleted anyway in step 4 of the transition function.
The rules \RuResolve and \RuCompress are
additional and correspond to steps 3 and 4 of
$\delta_A$.

\begin{definition} 
A \BT \emph{derivation} $\pi$ is a derivation defined from the rules in
Figure \ref{fig.BTrules}, such that the rules are applied with the following
priority: first \RuResolve, then \RuCompress, and then all other rules.		
\end{definition}

Just as annotated sequents correspond to macrostates of the deterministic automaton $\bbA^D$, the soundness condition of \BTInf and \BT correspond to the acceptance condition of $\bbA^D$: We say that a pair $(k,s)$ is preserved at a node, if $s$ is in play at position $k$ at the corresponding macrostate and not marked red; and progresses if it is marked green. 
	
\begin{definition}
Let $\pi$ be a \BT derivation of $\Phi$, $m = \max_\Omega(\Phi)$ and $S$ be a set of nodes in $\pi$. 
Let $k \in \{0,2,...,m\}$ and $s \in 2^*$. We say that the pair $(k,s)$ 
		\begin{itemize}
			\item is \emph{preserved} on $S$ if 
			\begin{itemize}
				\item $s$ occurs in $\sfS(v)_k^N$ for every $v$ in $S$ and
				\item if $\sfR(v) =  \RuCompress[k][t]$ for a node $v$ in $S$, then $t \not\sqsubset s$,
			\end{itemize}
			\item \emph{progresses} (infinitely often) on $S$ if there is $s'= s0\cdots0$ such that $\sfR(v) = \RuCompress[k][s'1]$ for some $v$ in $S$ (for infinitely many $v \in S$). 
		\end{itemize}
	\end{definition}

\begin{definition}
Let $\pi$ be a \BT derivation. 
An infinite branch $\alpha = (u_i)_{i\in \omega}$ in $\pi$ is \emph{successful} 
if there are $N$ and $(k,s)$ such that $(k,s)$ is preserved and progresses 
infinitely often on $\{u_i \| i \geq N\}$.
A \emph{$\BTInf$ proof} is a \BT derivation without occurrences of \RuDischarge 
and such that all infinite branches are successful.
A \emph{$\BT$ proof} is a finite \BT derivation such that for each \emph{strongly
connected subgraph} $S$ in $\calT_\pi^C$ there exists $(k,s)$ that is preserved
and progresses on $S$.

We write $\BT \vdash \Gamma$ ($\BTInf \vdash \Gamma$) if there is a \BT (\BTInf) proof of $\Gamma$, i.e., a proof, where $\Gamma$ is the sequent at the root of the proof. 
\end{definition}
	
\begin{remark}
\label{r:jsbt}
In the proof system \JS introduced by Jungteerapanich and Stirling \cite{Jungteerapanich2010,Stirling2014} annotated sequents are of the form $\theta \vdash \phi_1^{a_1},...,\phi_n^{a_n}$, where $a_1,...,a_n$ are sequences of names and the so-called \emph{control} $\theta$ is a linear order on all names occurring in the sequent.
In contrast to \JS our sequents consist of formulas with annotations and 
nothing else, that is, no control.
On the other hand the soundness condition of \BT is less local: It speaks about
strongly connected subgraphs, whereas in \JS only paths between leafs and its companions have to be checked.
We see that the control in \JS gives information on the structure of the cyclic 
proof tree. 
Interestingly, we could also add a control to our sequents and obtain a 
soundness condition that talks about paths, if desired. 
Similarly, in \cite{Afshari2022} a control was added to a cyclic system for the
first-order $\mu$-calculus introduced by \cite{Sprenger03} to obtain a path-based
system.
\end{remark}

\subsection{Soundness and Completeness}\label{sec.sub.soundCompleteness}
The intuitive idea behind the \BTInf proof system is the following: 
Starting with an \NW proof, we can define a nondeterministic parity automaton
$\bbA$, that checks if an infinite branch carries a $\nu$-trail. 
Using the determinization method from Section \ref{sec.det} we simulate 
macrostates of $\bbA^D$ by annotated formulas in the proof system. 
Thus an infinite branch in \BTInf resembles an infinite run of $\bbA^D$.
This will be formalised in the Soundness and Completeness proofs.

\paragraph{Tracking automaton}

Let $\Phi$ be a sequent of formulas, $\eta x_1. \psi_1,...,\eta x_n. \psi_n$ the fixpoint formulas in $\Fix(\Phi)$ and $\Omega$ the parity function on $\Fix(\Phi)$.

We define a nondeterministic parity automaton that checks if there is a 
$\nu$-trail on an infinite branch of some \NW proof of $\Phi$. 
The alphabet $\Sigma$ consists of all triples $(\Gamma,\xi,\Gamma')$, where 
$\Gamma \subseteq \Clos(\Phi)$ is the conclusion and $\Gamma' \subseteq 
\Clos(\Phi)$ is the premise of a rule in Figure \ref{fig.NWrules} with principal
formula $\xi$. 
We define the following nondeterministic parity automaton $\mathbb{A} = (A,\Delta, a_I,\Omega_A)$:
\begin{itemize}
	\item $A = a_I \cup \Clos(\Phi) \cup \{\eta x. \psi^* \| \eta x. \psi \in \Clos(\Phi)\}$,
	\item For each $\gamma \in A $ and $(\Gamma,\xi,\Gamma') \in \Sigma$:
	\begin{enumerate}
		\item if $\gamma = a_I$, then $\Delta(\gamma,(\Gamma,\xi,\Gamma')) = \Phi$,
		\item if $\gamma = \xi = \eta x.\psi$ then $\Delta(\gamma,(\Gamma,\xi,\Gamma')) = \{\eta x. \psi^*\}$,
		\item if $\gamma = \eta x. \psi^*$, then $\Delta(\gamma,(\Gamma,\xi,\Gamma')) = \{\gamma' \| (\psi\unsubst{x}{\eta x. \psi},\gamma') \in \sfT_{\Gamma,\xi,\Gamma'} \}$ and
		\item else $\Delta(\gamma,(\Gamma,\xi,\Gamma')) = \{\gamma' \| (\gamma,\gamma') \in \sfT_{\Gamma,\xi,\Gamma'}\}$.
	\end{enumerate}
\item For all states $\eta x. \psi^*$ let $\Omega_A(\eta x. \psi^*) = \Omega(\eta x. \psi)$. For all other states $a$ let $\Omega_A(a)= \max_\Omega(\Phi)$ if $\max_\Omega(\Phi)$ is odd and $\Omega_A(a)= \max_\Omega(\Phi) +1$ else.
\end{itemize}

Let $\alpha = (v_n)_{n\in\omega}$ be an infinite branch in an \NW-proof $\pi$. 
We define $w(\alpha) \in \Sigma^{\omega}$ to be the infinite word $(\sfS(v_0),\sff(v_0),\sfS(v_0))(\sfS(v_0),\sff(v_0),\sfS(v_1))(\sfS(v_1),\sff(v_1),\sfS(v_2))...$.

\begin{lemma}\label{lem.trackAut}
	Let $\alpha$ be an infinite branch in an \NW proof. Then $\alpha$ carries a $\nu$-trail iff $w(\alpha) \in \calL(\bbA)$.
\end{lemma}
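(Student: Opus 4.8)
\textbf{Proof plan for Lemma~\ref{lem.trackAut}.}
The plan is to show that runs of $\bbA$ on $w(\alpha)$ correspond, up to a harmless bookkeeping shift, to trails on the branch $\alpha$, and that the parity acceptance condition of $\bbA$ picks out exactly the $\nu$-trails. I would begin by unpacking the definition of $w(\alpha)$: its $i$-th letter (for $i \geq 1$) is the triple $(\sfS(v_{i-1}),\sff(v_{i-1}),\sfS(v_i))$, so that the transition taken at step $i$ uses precisely the trail relation $\sfT_{v_{i-1},v_i}$. (The $0$-th letter is the ``dummy'' self-loop triple $(\sfS(v_0),\sff(v_0),\sfS(v_0))$ used only to move out of $a_I$ via clause~1, landing somewhere in $\Phi = \sfS(v_0)$.) The three nontrivial clauses of $\Delta$ then describe how a state tracks a formula: clause~4 is the ``ordinary'' step, following $\sfT$; clause~2 detects that the tracked formula is the principal fixpoint formula $\xi = \eta x.\psi$ and moves to the marked copy $\eta x.\psi^*$; clause~3 then, from the marked state, forces the next step to follow the \emph{unfolding} $\psi[\eta x.\psi/x]$ along $\sfT$. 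The parity function assigns $\Omega(\eta x.\psi)$ to each marked state $\eta x.\psi^*$ and a uniform odd value (one more than any priority, or $\max_\Omega(\Phi)$ itself if that is already odd) to all unmarked states, so that the min priority seen infinitely often is even iff some marked $\nu$-state is seen infinitely often and no marked state of lower (higher-priority) odd index is.

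For the forward direction, given a $\nu$-trail $\tau = (\phi_n)_{n<\omega}$ on $\alpha$, I would build a run of $\bbA$ as follows: start at $a_I$, take the dummy step into $\phi_0 \in \Phi$, and thereafter at each step $i$ move from (the state tracking) $\phi_{i-1}$ to $\phi_i$ via $\sfT_{v_{i-1},v_i}$, \emph{inserting} a marked state $\eta x.\psi^*$ exactly at those positions where $\phi_{i-1} = \sff(v_{i-1})$ is a principal fixpoint formula being unfolded. The subtle point here is that when $\sff(v_{i-1}) = \eta x.\psi$ is the principal formula of a fixpoint rule, the pair $(\eta x.\psi,\, \psi[\eta x.\psi/x])$ lies in $\sfA_{v_{i-1},v_i}$, so the run can legally go $\eta x.\psi \leadsto \eta x.\psi^* \leadsto \psi[\eta x.\psi/x]$ by clauses~2 and~3 --- but this takes \emph{two} automaton transitions for \emph{one} proof step. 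This mismatch is resolved because clause~2 reads the same letter that clause~3 will subsequently consume: we need the marked state to be a transient state that does not block progress. Concretely one shows the run can be arranged so that whenever the tracked formula is about to become a fixpoint formula that the proof is going to unfold, we are already sitting in its marked copy; i.e.\ the marked states sit on the trail precisely at the unfolding moments. The key fact making this work is the standard characterisation (cited in the preliminaries, ``any infinite trace features a unique formula occurring infinitely often\dots''): the tightening $\widehat\tau$ being a $\nu$-trace means exactly that the unique formula unfolded infinitely often along $\widehat\tau$ is a $\nu$-formula $\eta x.\psi$ with $\eta = \nu$, which is the one whose marked state is visited infinitely often with the minimal priority. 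Passivity steps of the trail (where $\phi_i = \phi_{i+1}$ via $\sfP$) contribute unmarked states only and do not affect the liminf of priorities.

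For the converse, given an accepting run $\rho$ of $\bbA$ on $w(\alpha)$, I would read off a trail on $\alpha$ by projecting away the marked states: replace each maximal block $\gamma, \eta x.\psi^*, \gamma'$ (forced by clauses 2 then 3) by the single transition $\gamma \leadsto \gamma'$, which by clause~3 is a $\sfT$-step from $\psi[\eta x.\psi/x]$, and by clause~2 we had $\gamma = \eta x.\psi = \sff$ of the relevant node, so $\gamma \leadsto \psi[\eta x.\psi/x]$ is an unfolding $\sfT$-step followed by a $\sfT$-step --- hence by transitivity we recover a genuine trail $\tau$. Then I would verify that $\widehat\tau$ is a $\nu$-trace: acceptance says the minimal priority occurring infinitely often in $\rho$ is even; by construction of $\Omega_A$ this priority is attained only at marked states, so some marked $\nu$-state $\nu x.\psi^*$ occurs infinitely often and every marked state occurring infinitely often has priority $\geq \Omega(\nu x.\psi)$, with equality forcing it to be a $\nu$-state. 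Unravelling the dependence order $<_\Phi$ and the fact that $\Omega$ respects it, this says precisely that the formula unfolded infinitely often (and cofinitely often a subformula) along $\widehat\tau$ is a $\nu$-formula, i.e.\ $\tau$ is a $\nu$-trail. The main obstacle I anticipate is purely bookkeeping: keeping the one-proof-step-versus-two-automaton-steps discrepancy straight, i.e.\ setting up the right correspondence between positions on $\alpha$ and positions in the run so that marked states are inserted ``at the right time'' and nothing is double-counted or dropped; once that indexing is fixed, the priority computation is a direct translation between the $\nu$-trace condition on $\widehat\tau$ and the parity condition on $\rho$.
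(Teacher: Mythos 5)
Your overall strategy --- runs of $\bbA$ correspond to trails with the starred states marking the unfolding moments, and the parity condition then transcribes the $\nu$-trace condition on the tightened trail --- is the intended one. But the ``two automaton transitions for one proof step'' mismatch that you spend effort resolving does not exist, and your proposed resolution misdescribes the automaton. Look at the indexing: a run $a_0a_1a_2\dots$ on $w(\alpha)=y_0y_1y_2\dots$ satisfies $a_1\in\Delta(a_I,y_0)=\Phi=\sfS(v_0)$, and thereafter $a_{i+1}$ is always (a representative of) a formula in $\sfS(v_i)$; the dummy letter $y_0$ exists precisely to create this offset. When $a_i=\xi=\eta x.\psi=\sff(v_{i-1})$, clause~2 consumes the letter $y_i=(\sfS(v_{i-1}),\sff(v_{i-1}),\sfS(v_i))$ and puts $a_{i+1}=\eta x.\psi^*$; this starred state is not a transient insertion but \emph{occupies the run position corresponding to $v_i$ and stands for the unfolding} $\psi[\eta x.\psi/x]\in\sfS(v_i)$. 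Clause~3 then consumes the \emph{next} letter $y_{i+1}$ (not ``the same letter'', as you claim) and moves to a $\sfT_{v_i,v_{i+1}}$-successor of $\psi[\eta x.\psi/x]$. So the run never falls out of lockstep with the branch, and in the forward direction you do not insert marked states: you take the trail $(\phi_n)$, shift by one, and \emph{substitute} $\eta x.\psi^*$ for $\phi_n$ exactly at those positions where $\phi_{n-1}=\sff(v_{n-1})=\eta x.\psi$ was unfolded (so that $\phi_n=\psi[\eta x.\psi/x]$).

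The same misreading makes your backward direction wrong as literally stated: ``projecting away the marked states'' and contracting a block $\gamma,\eta x.\psi^*,\gamma'$ into one transition would shorten the sequence at every unfolding and desynchronise it from $\alpha$; moreover the trail relation is a relation between consecutive nodes, so there is no ``transitivity'' to invoke. The correct operation is again a length-preserving \emph{replacement}: in an accepting run, replace each $a_{i+1}=\eta x.\psi^*$ by $\psi[\eta x.\psi/x]$; clause~2 guarantees $(a_i,\psi[\eta x.\psi/x])\in\sfA_{v_{i-1},v_i}$ and clause~3 guarantees $(\psi[\eta x.\psi/x],a_{i+2})\in\sfT_{v_i,v_{i+1}}$, so the resulting sequence is a genuine trail on $\alpha$. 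Once this correspondence is set up correctly, your priority computation is fine: starred states are visited exactly when the corresponding fixpoint formula is unfolded on the trail, all unstarred states carry an odd default priority at least $\max_\Omega(\Phi)$, and the minimal priority occurring infinitely often is even iff the unique most significant formula of the tightened trail is a $\nu$-formula. So the gap is repairable, but as written the central bookkeeping --- which you yourself identify as the crux --- is set up on a false premise.
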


Combining Lemma \ref{lem.trackAut} and Theorem \ref{thm.correctnessNPtoDR} from Section \ref{sec.det}  we get
\begin{lemma}\label{lem.NWiffDRautomaton}
	Let $\pi$ be an \NW derivation. Then $\pi$ is an \NW proof iff for every infinite branch $\alpha$ in $\pi$ it holds $w(\alpha) \in \calL(\bbA^D)$.
\end{lemma}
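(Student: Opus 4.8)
\textbf{Proof plan for Lemma~\ref{lem.NWiffDRautomaton}.} The statement is essentially a bookkeeping corollary of Lemma~\ref{lem.trackAut} together with the correctness Theorem~\ref{thm.correctnessNPtoDR}, so the plan is to chain the two equivalences, being careful about one subtlety: an \NW proof is defined by the requirement that \emph{every} infinite branch carries a $\nu$-trail, whereas acceptance of $w(\alpha)$ by $\bbA^D$ is defined run-by-run. The plan is to argue: $\pi$ is an \NW proof iff every infinite branch $\alpha$ of $\pi$ carries a $\nu$-trail (definition of \NW proof), iff every infinite branch $\alpha$ of $\pi$ satisfies $w(\alpha)\in\calL(\bbA)$ (Lemma~\ref{lem.trackAut}, applied branch by branch), iff every infinite branch $\alpha$ of $\pi$ satisfies $w(\alpha)\in\calL(\bbA^D)$ (Theorem~\ref{thm.correctnessNPtoDR}, which gives $\calL(\bbA)=\calL(\bbA^D)$, applied to the single word $w(\alpha)$). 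The three ``iff''s compose to the claim.

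The only genuinely non-routine point is checking that Lemma~\ref{lem.trackAut} is applicable to every infinite branch of an arbitrary \NW \emph{derivation} $\pi$, not just of an \NW \emph{proof}. Reading the statement of Lemma~\ref{lem.trackAut}, it is phrased for ``an infinite branch in an \NW proof'', but its content is purely local to the branch: the word $w(\alpha)$ is read off from the sequents and principal formulas along $\alpha$, and the automaton $\bbA$ was constructed from $\Phi$ and the \NW rules alone, with no reference to the trace condition. Hence the equivalence ``$\alpha$ carries a $\nu$-trail iff $w(\alpha)\in\calL(\bbA)$'' holds for any infinite branch of any \NW derivation of $\Phi$; I would either note this explicitly or, to be safe, observe that one may restrict attention to the subderivation obtained by following $\alpha$, which can be completed to (or is already a fragment of) a derivation to which Lemma~\ref{lem.trackAut} applies verbatim. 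I would also remark that $w(\alpha)\in\Sigma^\omega$ is well-defined precisely because each step $(v_i,v_{i+1})$ along an infinite branch is an application of an \NW rule with some principal formula $\sff(v_i)$ (infinite branches cannot pass through axioms), so the triple $(\sfS(v_i),\sff(v_i),\sfS(v_{i+1}))$ is a legitimate letter of $\Sigma$.

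With that in place the proof is a two-line composition. I expect no real obstacle; the ``main'' care is simply not to conflate the universally-quantified branch condition defining an \NW proof with the existential run semantics of the automaton, and to invoke $\calL(\bbA)=\calL(\bbA^D)$ from Theorem~\ref{thm.correctnessNPtoDR} at the level of individual words $w(\alpha)$ rather than at the level of the whole derivation. No $\Phi$-dependence causes trouble here: $\bbA$ and $\bbA^D$ are fixed once the root sequent $\Phi$ is fixed, and since $\pi$ derives $\Gamma=\Phi$ this is exactly the automaton we want.
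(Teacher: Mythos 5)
Your proposal is correct and matches the paper exactly: the paper gives no separate proof of this lemma, simply stating that it follows by combining Lemma~\ref{lem.trackAut} with Theorem~\ref{thm.correctnessNPtoDR}, which is precisely your chain of equivalences. Your extra observations (that Lemma~\ref{lem.trackAut} applies branch-locally to arbitrary derivations, not just proofs, and that $w(\alpha)$ is well-defined on infinite branches) are careful points the paper leaves implicit.
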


\begin{lemma}\label{lem.NWiffBT}
	Let $\Gamma$ be a sequent. Then $\NW \proves \Gamma$ iff $\BT \proves \Gamma^{\epsilon}$.
\end{lemma}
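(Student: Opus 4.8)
\textbf{Proof plan for Lemma \ref{lem.NWiffBT}.}

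The plan is to establish both directions by a tight simulation between \NW-proofs and \BT-proofs, using the tracking automaton $\bbA$ and its determinization $\bbA^D$ as the bridge. The key observation driving everything is that the annotations in a \BT-sequent are precisely the $\TSeq(m)$-components of macrostates of $\bbA^D$, and the rules of \BT were designed so that a branch in a \BT-derivation, read off without annotations, is an \NW-branch, while the annotation updates along that branch track the run of $\bbA^D$ on the corresponding word $w(\alpha)$. So the bulk of the work is to make this correspondence precise and then invoke Lemma~\ref{lem.NWiffDRautomaton} together with the characterizations of successful \BT-branches.

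For the direction $\BT \proves \Gamma^{\epsilon} \Impl \NW \proves \Gamma$: given a \BT-proof $\pi$, I would first pass to the associated $\BTInf$ proof by unravelling the discharge/companion structure (a finite \BT-proof with its back-edges unfolds to an infinite \BTInf-derivation, and the strongly-connected-subgraph condition guarantees that every infinite branch of the unravelling is successful; this is a standard cyclic-to-non-wellfounded translation that I expect to state as a preliminary observation). Then I erase all annotations from $\pi$ and also contract away the applications of \RuResolve and \RuCompress (which, stripped of annotations, merely duplicate/relabel and so become trivial — the conclusion and premise have the same underlying formula-set). This yields an \NW-derivation $\pi^-$ of $\Gamma$. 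It remains to check that $\pi^-$ is an \NW-\emph{proof}, i.e. every infinite branch carries a $\nu$-trail. Fix an infinite branch $\beta$ of $\pi^-$; it lifts (non-uniquely, but pick one) to an infinite branch $\alpha$ of $\pi$. The sequence of annotated sequents along $\alpha$ realizes exactly the run of $\bbA^D$ on $w(\beta) = w(\alpha)$ — here I use that steps 1(a), 1(b)+2, 3, 4 of $\delta_A$ are implemented respectively by every rule, by \RuMu/\RuNu, by \RuResolve, by \RuCompress, as spelled out in the paragraph after Definition of \BT-derivations. Since $\alpha$ is successful, some $(k,s)$ is preserved and progresses infinitely often on a tail of $\alpha$; by the definitions of ``preserved'' and ``progresses'' this says precisely that, at position $k$, the string $s$ stays in play cofinitely, is never marked red on the tail, and is marked green infinitely often — i.e. $w(\beta)$ is accepted by $\bbA^D$. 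By Lemma~\ref{lem.NWiffDRautomaton}, $\pi^-$ is an \NW-proof.

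For the converse $\NW \proves \Gamma \Impl \BT \proves \Gamma^{\epsilon}$: by Theorem~\ref{thm.NWSoundCompleteness} take a \emph{regular} \NW-proof $\pi$ of $\Gamma$. I would annotate $\pi$ top-down: start the root with annotation $(\epsilon,\dots,\epsilon)$, and at each \NW-rule application apply the matching \BT-rule, computing the new annotations by following $\delta_A$; wherever $\delta_A$'s Resolve or Compress steps fire, insert the corresponding \RuResolve / \RuCompress applications (respecting the stated rule-priority: first \RuResolve, then \RuCompress, then the \NW-style rule). This produces a \BTInf-derivation $\pi'$ whose underlying \NW-derivation is $\pi$ and whose annotation-stream on any branch $\alpha$ is the $\bbA^D$-run on $w(\alpha)$. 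Because $\pi$ is an \NW-proof, every infinite branch's word is in $\calL(\bbA^D)$ (Lemma~\ref{lem.NWiffDRautomaton}), which via the ``preserved/progresses'' translation makes every infinite branch of $\pi'$ successful — so $\pi'$ is a \BTInf-proof. Finally I must fold $\pi'$ into a \emph{finite} \BT-proof: since $\pi$ is regular and $\bbA^D$ is a finite automaton, only finitely many annotated sequents occur, so $\pi'$ is (essentially) regular; I cut it at repeated nodes, inserting \RuDischarge and discharge tokens. The remaining point is that the resulting cyclic proof satisfies the strongly-connected-subgraph condition: any strongly connected subgraph $S$ of $\calT_{\pi}^C$ induces an infinite branch in the unravelling staying within $S$ cofinitely, which is successful, and one checks the witnessing $(k,s)$ can be taken to be preserved and progresses on $S$ itself (not just on a tail of some branch).

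\textbf{Main obstacle.} The conceptually routine-but-delicate core is the simulation bookkeeping: verifying that the \BT-rules, with their chosen priority and with the ``don't append zeros that would be deleted anyway'' optimization, genuinely implement one step of $\delta_A$ on the annotation tuple — in particular that \RuResolve and \RuCompress applied greedily terminate and reproduce exactly the Resolve/Compress phases (this leans on Proposition~\ref{prop.deltaProp}, confluence of the compression). The genuinely non-routine step, though, is the last one in the converse direction: translating the \emph{branch-based} success condition of \BTInf into the \emph{strongly-connected-subgraph} success condition of \BT on the folded proof. One must argue that a witness $(k,s)$ can be chosen uniformly for an entire SCC rather than merely along individual infinite paths — i.e. that preservation and progression, which a priori hold only on tails, can be promoted to hold on all of $S$. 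I expect this to require choosing $s$ minimal (or otherwise canonical) so that no \RuCompress[k][t] with $t \sqsubset s$ occurs anywhere in $S$, using regularity to see that the relevant compressions stabilize; this is the step where the extra \RuResolve-before-\RuCompress priority and the finiteness of $\bbA^D$ are really used.
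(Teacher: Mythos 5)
Your simulation is the same one the paper uses: translate rule by rule, observe that the annotation stream along a branch reproduces the run of $\bbA^D$ on $w(\alpha)$, and let Lemma~\ref{lem.NWiffDRautomaton} do the rest; the bookkeeping you list (priority of \RuResolve over \RuCompress, the omitted zeros, confluence of compression via Proposition~\ref{prop.deltaProp}) is exactly what the paper relies on. The only structural difference is one of decomposition: the paper's proof of this lemma in fact only checks the success condition on \emph{infinite branches} in both directions, i.e.\ it really establishes the $\BTInf$ version of the equivalence, and the passage between $\BTInf$ and cyclic $\BT$ proofs is isolated in Lemma~\ref{lem.BTInfToBT} in the appendix, whereas you absorb that folding/unfolding into the present proof. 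Concerning the step you flag as the genuinely non-routine one --- promoting a tail-witness $(k,s)$ to a witness for an entire strongly connected subgraph $S$ --- the paper's resolution is simpler than you anticipate: pick a single infinite path of the unravelling that eventually stays inside $S$ and visits every node of $S$ infinitely often; a sufficiently late tail of that path then consists precisely of (copies of) the nodes of $S$, so preservation of $(k,s)$ on that tail is literally preservation on all of $S$, and infinitely-often progression on the tail puts a \RuCompress[k][s'1] at some node of $S$. No minimal or canonical choice of $s$ and no stabilization-via-regularity argument is needed; regularity is used only to make the cut points exist and the folded tree finite (via K\"onig's Lemma).
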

\begin{proof}[Sketch]
Let $\pi$ be an \NW proof of a sequent $\Gamma$. 
Inductively we translate every node $v$ in $\pi$ to a node $v'$ plus some 
additional nodes, such that $v'$ is labeled by the same sequent as $v$ plus 
annotations. 
This can be achieved by replacing every rule in \NW by its corresponding rule in 
\BT and adding the rules \RuResolve and \RuCompress whenever applicable. 
This yields a \BT derivation $\rho$. 
It remains to show that every infinite branch $\alpha= (v_i)_{i\in\omega}$ in 
$\rho$ is successful. 
Let $\hat{\alpha}$ be the corresponding infinite branch in $\pi$. 
Due to Lemma \ref{lem.NWiffDRautomaton} it holds that $\hat{\alpha} \in 
\calL(\bbA^D)$. 
Thus there is $(k,s)$ such that $s$ is in play at position $k$ cofinitely often 
and $c_k(s)$ is \green infinitely often and \red only finitely often. 
As the annotations in $\alpha$ resemble the annotations in the run of $\bbA^D$ 
on $\hat{\alpha}$ it follows that there is some $N\in\omega$ such that $(k,s)$
is preserved and progresses infinitely often on $\{v_i\| i\geq N\}$.
	
Conversely let $\rho$ be a \BT proof of $\Gamma^{\epsilon}$. 
We let $\pi$ be the \NW derivation defined from $\rho$ by omitting the rules 
\RuResolve and \RuCompress and reducing the other rules to the corresponding 
\NW rules. 
We have to show that every infinite branch $\alpha$ in $\pi$ is successful. 
Let $\alpha'= (v_i)_{i\in\omega}$ be the corresponding infinite branch in $\rho$. 
Because $\rho$ is a \BT proof there is $N$,$(k,s)$ such that $(k,s)$ is preserved
and progresses infinitely often on $\{v_i\| i\geq N\}$. 
Again the annotations in $\alpha'$ resemble the annotations in the run of 
$\bbA^D$ on $\alpha$, thus $(k,s)$ witnesses the acceptance of the run of 
$\calL(\bbA^D)$ on $\alpha$ and Lemma \ref{lem.NWiffDRautomaton} concludes the
proof.
\end{proof}

\begin{theorem}[Soundness and Completeness]\label{thm.BTInfSoundnessCompleteness}
Let $\Gamma$ be a sequent. 
Then there is a $\BTInf$-proof of $\Gamma^{\epsilon}$ iff $\bigvee \Gamma$ is 
valid.
\end{theorem}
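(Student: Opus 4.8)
The plan is to derive Theorem~\ref{thm.BTInfSoundnessCompleteness} by sandwiching $\BTInf$-provability between $\BT$-provability and validity, using the results already assembled. Concretely, I would show the chain of equivalences
\[
\bigvee\Gamma \text{ valid}
\quad\Longleftrightarrow\quad \NW \proves \Gamma
\quad\Longleftrightarrow\quad \BT \proves \Gamma^{\epsilon}
\quad\Longrightarrow\quad \BTInf \proves \Gamma^{\epsilon}
\quad\Longrightarrow\quad \NW \proves \Gamma,
\]
where the first equivalence is Theorem~\ref{thm.NWSoundCompleteness}, the second is Lemma~\ref{lem.NWiffBT}, and the two remaining implications close the loop. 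The step $\BT \proves \Gamma^{\epsilon} \Rightarrow \BTInf \proves \Gamma^{\epsilon}$ is the standard ``unravelling'' of a cyclic proof into a non-wellfounded one: given a finite $\BT$ derivation, repeatedly unfold each discharge leaf $l$ by grafting a fresh copy of the subderivation rooted at its companion $c(l)$, producing an infinite, discharge-free $\BT$ derivation $\pi^{\infty}$. One must check that every infinite branch of $\pi^{\infty}$ is successful: such a branch eventually traces out a path in $\calT_{\pi}^{C}$ that passes through some back edge infinitely often, hence its set of infinitely-visited nodes projects onto a strongly connected subgraph $S$ of $\calT_{\pi}^{C}$; the $\BT$-proof condition supplies a pair $(k,s)$ preserved and progressing on $S$, and since preservation and progression are conditions on the labels/annotations along $S$ (and tails of the branch visit exactly the nodes of $S$ cofinitely often), $(k,s)$ witnesses success of the branch from some point on. A small care point here is that progression on $S$ in the $\BT$-proof sense (``there is $v \in S$ with $\sfR(v)=\RuCompress[k][s'1]$'') must be upgraded to ``progresses infinitely often'' along the branch, which follows because the branch returns to that $v$ infinitely often.

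For the final implication $\BTInf \proves \Gamma^{\epsilon} \Rightarrow \NW \proves \Gamma$, I would again forget the annotations: let $\pi$ be the $\NW$ derivation obtained from a $\BTInf$ proof $\rho$ of $\Gamma^{\epsilon}$ by deleting the auxiliary rules \RuResolve and \RuCompress and reducing the remaining $\BT$ rules to their $\NW$ counterparts, exactly as in the second half of the proof of Lemma~\ref{lem.NWiffBT}. The annotation bookkeeping along any infinite branch $\alpha$ of $\pi$ reconstructs the run of $\bbA^{D}$ on $w(\alpha)$ (this is the key simulation lemma implicit in Lemma~\ref{lem.NWiffBT}'s sketch), so that a successful branch of $\rho$---i.e.\ one carrying a preserved, infinitely-progressing pair $(k,s)$---maps to an accepting run of $\bbA^{D}$ on $w(\alpha)$. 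By Lemma~\ref{lem.NWiffDRautomaton} this means $\pi$ is an $\NW$ proof, and by Theorem~\ref{thm.NWSoundCompleteness} that $\bigvee\Gamma$ is valid; conversely validity of $\bigvee\Gamma$ gives $\NW\proves\Gamma$, hence $\BT\proves\Gamma^{\epsilon}$ by Lemma~\ref{lem.NWiffBT}, hence $\BTInf\proves\Gamma^{\epsilon}$ by the unravelling above.

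I expect the main obstacle to be the careful matching between the \emph{graph-theoretic} soundness condition of $\BT$ (quantifying over strongly connected subgraphs of $\calT_{\pi}^{C}$) and the \emph{branch-wise} success condition of $\BTInf$ when passing through the unravelling: one has to argue that the set of nodes visited infinitely often along an infinite branch of the unravelled proof corresponds, under the folding map back to $\pi$, to a genuine strongly connected subgraph (not merely a strongly connected \emph{set}), and conversely that every relevant strongly connected subgraph of $\calT_{\pi}^{C}$ is witnessed by some branch. A secondary subtlety is ensuring that the rule-priority discipline in Definition~\ref{...} (\RuResolve before \RuCompress before the rest) is respected by the unravelling and does not interfere with grafting copies of subderivations---this is routine but must be stated. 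Everything else is a direct appeal to Theorem~\ref{thm.NWSoundCompleteness}, Lemma~\ref{lem.NWiffBT}, and Lemma~\ref{lem.NWiffDRautomaton}.
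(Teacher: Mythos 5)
Your argument is correct, but it takes a longer route than the paper, and the detour is driven by how you read Lemma~\ref{lem.NWiffBT}. The paper's own proof of the theorem is a one-line combination of Theorem~\ref{thm.NWSoundCompleteness} with Lemma~\ref{lem.NWiffBT}: although that lemma's statement is phrased with ``$\BT \proves \Gamma^{\epsilon}$'', its proof constructs a discharge-free derivation and verifies that every \emph{infinite branch} is successful, i.e.\ it establishes exactly that $\NW \proves \Gamma$ iff $\BTInf \proves \Gamma^{\epsilon}$, which is all the theorem needs. You instead take the lemma at face value as a statement about the finite cyclic system and close the circle via $\BT \Rightarrow \BTInf \Rightarrow \NW$. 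Both of your extra implications are sound: the unravelling argument (the infinitely-visited nodes of a branch of the unfolding project to a strongly connected subgraph of $\calT_{\pi}^{C}$, and the witnessing \RuCompress node is revisited infinitely often, which upgrades ``progresses'' to ``progresses infinitely often'') is precisely the second half of the paper's proof of Lemma~\ref{lem.BTInfToBT}, and your $\BTInf \Rightarrow \NW$ step is the ``conversely'' half of the proof of Lemma~\ref{lem.NWiffBT}. So nothing fails, but you are re-deriving one direction of the cyclic/non-wellfounded equivalence that the paper deliberately postpones to Section~\ref{sec.sub.cyclicBT}, and you re-prove a direction of Lemma~\ref{lem.NWiffBT} that is already available. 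The shorter path is to observe that Lemma~\ref{lem.NWiffBT}, as actually proved, already speaks about the non-wellfounded system, so the theorem follows immediately from it together with Theorem~\ref{thm.NWSoundCompleteness}; your version buys, as a by-product, one half of Lemma~\ref{lem.BTInfToBT}.
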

\begin{proof}
	This follows from Lemma \ref{lem.NWiffBT} and Theorem \ref{thm.NWSoundCompleteness}.
\end{proof}

\subsection{Cyclic \BT proofs}\label{sec.sub.cyclicBT}

As \NW proofs can be assumed to be regular and annotations are added deterministically we can also assume \BTInf proofs to be regular. A standard argument then transforms regular \BTInf proofs into \BT proofs and vice versa.

\begin{lemma}\label{lem.BTInfToBT}
	An annotated sequent is provable in $\BT$ iff it is provable in $\BTInf$.
\end{lemma}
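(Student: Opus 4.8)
\textbf{Proof plan for Lemma~\ref{lem.BTInfToBT}.}
The plan is to prove the two directions separately, using regularity as the bridge. For the direction from \BT to \BTInf, suppose $\pi$ is a (finite) \BT proof of an annotated sequent. By unravelling the proof tree with back edges $\calT_\pi^C$ — that is, repeatedly replacing each discharged leaf $l$ by a fresh copy of the subderivation rooted at its companion $c(l)$ — one obtains an infinite derivation $\pi^\infty$ with no occurrences of \RuDischarge. I would first check that $\pi^\infty$ is still a legal \BT derivation (the rule-priority discipline is preserved, since unravelling copies rule applications verbatim) and that it is regular (it has only finitely many distinct subderivations, namely those arising from the finitely many nodes of $\pi$). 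The key point is to verify that every infinite branch $\alpha$ of $\pi^\infty$ is successful. Any such branch projects onto an infinite path in $\calT_\pi^C$, and since $\calT_\pi^C$ is a finite graph, the set $S$ of nodes visited infinitely often along this path is a strongly connected subgraph of $\calT_\pi^C$. Applying the \BT-proof condition to $S$ yields a pair $(k,s)$ preserved and progressing on $S$; one then argues that, for $N$ large enough, every node of $\alpha$ beyond position $N$ lies in $S$, so $(k,s)$ is preserved and progresses infinitely often on $\{u_i \mid i \geq N\}$, witnessing that $\alpha$ is successful.

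For the converse direction, suppose $\Gamma^\sigma$ is provable in \BTInf. By Theorem~\ref{thm.BTInfSoundnessCompleteness} this is equivalent to $\bigvee\Gamma$ being valid, and by Theorem~\ref{thm.NWSoundCompleteness} the sequent then has a \emph{regular} \NW-proof; feeding this regular \NW-proof through the (deterministic, hence regularity-preserving) translation sketched in the proof of Lemma~\ref{lem.NWiffBT} produces a regular \BTInf-proof $\rho$ of $\Gamma^\sigma$. Alternatively — and this is the route I would actually write out, since it avoids reproving completeness — one invokes the remark preceding the lemma: \NW-proofs may be assumed regular and the \BT annotations are generated deterministically by the rules, so any \BTInf-proof can be taken to be regular to begin with. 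Given a regular \BTInf-proof $\rho$, it has finitely many distinct subderivations; I would then fold it into a finite \BT-proof by the standard construction: choose a finite set of "representative" nodes hitting every isomorphism type of subderivation, cut the infinite tree at the second occurrence of each repeated subderivation type, label that leaf with a fresh discharge token $\dx$, and label the matching ancestor with $\RuDischarge[\dx]$. This yields a finite \BT-derivation $\pi$ with $\calT_\pi^C$ a finite graph.

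It remains to check that $\pi$ so obtained satisfies the \BT-proof condition: for each strongly connected subgraph $S$ of $\calT_\pi^C$ there is $(k,s)$ preserved and progressing on $S$. Here I would argue contrapositively. An infinite path through $S$ in $\calT_\pi^C$ lifts, via the folding, to an infinite branch $\alpha$ of the unravelled proof $\rho$; more carefully, one uses König's lemma / finiteness of $S$ to find an infinite path visiting exactly the nodes of $S$ infinitely often, and lifts it to a branch of $\rho$ whose set of infinitely-often-visited "positions" corresponds to $S$. Since $\rho$ is a \BTInf-proof, that branch is successful, so some $(k,s)$ is preserved and progresses infinitely often on a final segment; by the correspondence this same $(k,s)$ is preserved and progresses on all of $S$. (One must be mildly careful that "preserved on $S$" is a safety-type condition — no $\RuCompress[k][t]$ with $t \sqsubset s$ anywhere on $S$ — which transfers correctly because the branch eventually stays inside $S$ and never re-enters a node outside it.)

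I expect the main obstacle to be the bookkeeping in this last step: matching up "strongly connected subgraph of $\calT_\pi^C$" with "set of nodes visited cofinally by some infinite branch of the unravelling" in both directions, and checking that the $(k,s)$-condition — which mixes a liveness part (progresses infinitely often) with a safety part (never red, i.e.\ no bad \RuCompress) — is genuinely invariant under the fold/unfold correspondence. The liveness part transfers by a straightforward infinitary pigeonhole argument; the safety part is where one has to be precise that every node appearing on the relevant branch of $\rho$ eventually corresponds to a node of $S$, so that a forbidden \RuCompress on the branch would already appear inside $S$. Everything else is the standard regular-proof $\leftrightarrow$ cyclic-proof dictionary.
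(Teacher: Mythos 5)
Your proposal is correct and matches the paper's proof in all essentials: both directions go via the standard fold/unfold correspondence between regular \BTInf-proofs and cyclic \BT-proofs, with the strongly-connected-subgraph condition verified exactly as you describe, by lifting an infinite path that visits precisely the nodes of $S$ infinitely often to a successful branch of the regular proof and transferring the witnessing pair $(k,s)$ back. The only (inessential) difference is that the paper chooses each companion/leaf pair so that the segment between them already witnesses preservation and progress of some $(k,s)$, whereas you cut at the first repetition of a subderivation type; since the final verification runs through arbitrary infinite paths in strongly connected subgraphs in either case, your simpler choice of cut points suffices.
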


\begin{theorem}[Soundness and Completeness]
	Let $\Gamma$ be a sequent. Then there is a $\BT$-proof of $\Gamma^{\epsilon}$ iff $\bigvee \Gamma$ is valid..
\end{theorem}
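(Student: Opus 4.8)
The plan is to assemble the final theorem directly from results already in hand, treating it as a routine corollary. By Lemma~\ref{lem.BTInfToBT}, an annotated sequent is provable in $\BT$ if and only if it is provable in $\BTInf$; in particular $\BT \proves \Gamma^{\epsilon}$ iff $\BTInf \proves \Gamma^{\epsilon}$. By Theorem~\ref{thm.BTInfSoundnessCompleteness}, $\BTInf \proves \Gamma^{\epsilon}$ iff $\bigvee\Gamma$ is valid. Chaining these two equivalences immediately yields that there is a $\BT$-proof of $\Gamma^{\epsilon}$ iff $\bigvee\Gamma$ is valid. So the proof is essentially one line: combine Lemma~\ref{lem.BTInfToBT} with Theorem~\ref{thm.BTInfSoundnessCompleteness}.

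If one wanted to avoid routing through $\BTInf$ and argue more self-containedly, the alternative route would go through Lemma~\ref{lem.NWiffBT} and Theorem~\ref{thm.NWSoundCompleteness} directly: Lemma~\ref{lem.NWiffBT} gives $\NW \proves \Gamma$ iff $\BT \proves \Gamma^{\epsilon}$, and Theorem~\ref{thm.NWSoundCompleteness} gives $\NW \proves \Gamma$ iff $\bigvee\Gamma$ is valid (the theorem also notes that $\NW$-proofs may be assumed regular, which is the fact that makes the cyclic system viable in the first place). Composing these again gives the claim. Either way, all the mathematical content has already been discharged upstream; the ``theorem'' here is packaging.

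The only place where genuine work lives — and hence where I would expect the real obstacle to be, were it not already stated as a lemma — is Lemma~\ref{lem.BTInfToBT}, i.e.\ the passage between the non-wellfounded system $\BTInf$ and the finitary cyclic system $\BT$. The forward direction (cyclic to infinitary) is obtained by unravelling the back-edges of $\calT_\pi^C$ into an infinite tree and checking that each infinite branch, which must eventually stay inside some strongly connected subgraph, inherits a preserved-and-progressing pair $(k,s)$ from that subgraph. The reverse direction (infinitary to cyclic) is the more delicate one: one uses regularity of $\BTInf$-proofs — justified by the remark that $\NW$-proofs are regular and that the $\BT$-annotations are computed deterministically from the $\NW$-derivation via the transition function $\delta_A$ — to fold a regular infinite proof into a finite graph, and then one must verify that the global success condition on infinite branches translates into the success condition on every strongly connected subgraph of $\calT_\pi^C$. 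This uses the fact that any strongly connected subgraph induces a cyclic path, which unravels to an infinite branch; success of that branch feeds a pair $(k,s)$ back to the subgraph. Since Lemma~\ref{lem.BTInfToBT} is given, for the present theorem I would simply cite it.

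In short, the proof I would write is: ``This follows from Lemma~\ref{lem.BTInfToBT} and Theorem~\ref{thm.BTInfSoundnessCompleteness}.'' (One should also silently correct the stray double period in the theorem statement.)
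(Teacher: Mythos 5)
Your proof is correct and is exactly the argument the paper intends: the theorem is the composition of Lemma~\ref{lem.BTInfToBT} with Theorem~\ref{thm.BTInfSoundnessCompleteness} (equivalently, of Lemma~\ref{lem.NWiffBT} with Theorem~\ref{thm.NWSoundCompleteness}), with all substantive work already discharged in those results. Your accompanying sketch of where the real content of Lemma~\ref{lem.BTInfToBT} lies also matches the paper's appendix proof.
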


\begin{remark}
The number of distinct subtrees in a regular \BTInf proof can be bounded by the
number of distinct annotated sequents. This follows because
the same statement holds for \NW proofs \cite{Niwinski1996} and because
in the proof of Lemma \ref{lem.NWiffBT} annotations and extra rules are
added deterministically to sequents in \NW proofs. 
	
Let $\Phi$ be a sequent, $n = |\Clos(\Phi)|$ and $m = \max_\Omega(\Phi)$. 
There are at most $n^{\calO(m\cdot n)}$ many distinct annotated sequents
occurring in a proof of $\Phi$, because annotated sequents resemble macrostates
in $\bbA^D$ and as seen in  Remark \ref{rem.complexityNPtoDR} there are at most 
$n^{\calO(m\cdot n)}$ distinct macrostates in $\bbA^D$.
	
Combining these two observations with the proof of Lemma \ref{lem.BTInfToBT}
yields that the height of a \BT proof of a sequent $\Phi$ can be bound by 
$n^{\calO(m\cdot n)}$. 
This is the same complexity as in \JS \cite{Jungteerapanich2010}. 
\end{remark}

\begin{remark}
Given a \BT derivation $\pi$, we can check if $\pi$ is a \BT proof in
$\mathrm{coNP}$. 
We can give the following algorithm in $\mathrm{NP}$, that checks if $\pi$ is 
not a \BT proof: Choose non-deterministically a strongly connected subgraph $S$
and check if there exists $(k,s)$ that is preserved and progresses on $S$, the
latter can be done in polynomial time. 
The complexity of proof checking can be compared to linear time in \JS and $\mathrm{PSPACE}$ in \NW. Note that, if we add a control to the \BT proof system, the soundness condition boils down to checking paths between leafs and its companions. In that case proof checking could also be done in linear time.
\end{remark}

	\section{Conclusions and Future Work}

We hope that this paper contributes to the theory of non-wellfounded and cyclic 
proof systems by discussing applications of automata theory in the field.
We have argued for the relevance of the notion of determinizing 
stream automata in the design of proof systems for the modal $\mu$-calculus.
More concretely, we have introduced a determinization construction based on binary
trees and used this to obtain a new derivation system \BT which is cyclic, 
cutfree, and sound and complete for the collection of valid $\muML$-formulas.
In the remainder of this concluding section we point out some directions for 
future research.

First of all, our approach is not restricted to the modal $\mu$-calculus, but
will apply to non-wellfounded and cyclic derivation systems for many other logics 
as well.
For instance, in the proof systems  $\mathrm{LKID}^\omega$ \cite{Brotherston2006}
for first-order logic with inductive definitions, cyclic arithmetic $\mathrm{CA}$
\cite{Simpson2017} and similar systems the trace condition is of the form that 
on every infinite branch there is a term/variable which progresses infinitely 
often.
This condition can be checked by a nondeterministic Büchi automaton and thus our method would yield an annotated proof system, where the annotations are binary strings, 
which label the terms/variables. 

Second, in Remark~\ref{r:jsbt} we discussed some relative advantages and 
disadvantages of the systems \JS and \BT.
It would be interesting to either design a system that combines the advantages
of both systems (i.e. sequents consisting of annotated formulas only as in \BT, 
and a local condition for proof checking as in \JS), or prove that such a system 
cannot exist.


Finally, it would be interesting (and in fact, it was one of the original aims 
of our work), to connect annotation-based sequent calculi such as \JS and \BT to
Kozen's Hilbert-style proof system and to see whether a more 
structured automata-theoretic approach would yield an alternative proof
of Walukiewicz' completeness result.
Note that this was also the goal of Afshari \& Leigh \cite{Afshari2017}; 
unfortunately, it was recently shown by the second author \cite{Kloibhofer2023} that
the system $\mathsf{Clo}$, a key system in Afshari \& Leigh's approach linking
\JS to Kozen's axiomatization, is in fact incomplete.

	%
	%
 	\bibliographystyle{splncs04}
	\bibliography{bib.Tableaux}
	
	\appendix
	
\section{Appendix}\label{sec.appendix}

\subsection{Determinization of automata with binary trees}

\begin{proposition}\label{prop.deltaProp}
	\begin{enumerate}
		\item 	The transition function $\delta$ is
		well-defined. In particular, the result of step 4 does not depend on the
		order in which the witnesses $t \in T$ are chosen.\label{prop.deltaProp.wellDefined}
		\item The string $\epsilon$ is never marked red. \label{prop.deltaProp.epsilon}
		\item The lengths of the binary strings occurring in $\ran(f)$ are bounded by the size of $B$.\label{prop.deltaProp.Size}
	\end{enumerate}	
\end{proposition}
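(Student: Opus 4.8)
The plan is to prove the three items essentially in the order (1), then (3) and (2), since the latter two rest on the structural fact, established in (1), that step~4 always terminates producing a binary tree.

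For item~(1), I would first observe that steps~1--3 are deterministic: step~3 simply retains, for each state $a$, the $<$-maximal position at which it occurs. The only genuine content is that step~4 terminates and that its outcome is independent of the order in which the witnesses are chosen. Termination I would get from the measure $|\tree(\ran(f))|$: contracting a unary edge (variant~(a) or~(b)) merges a node with its unique child and strictly decreases this integer, which stays $\geq 1$ whenever $S' \neq \nada$ because $\epsilon$ is always present. For order-independence I would apply Newman's Lemma: since ``apply one compression step'' is a terminating relation, it suffices to prove \emph{local} confluence of the pair (tree, colouring). This I would do by a case distinction on two distinct applicable witnesses $t_1, t_2$. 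If $t_1$ and $t_2$ are $\sqsubseteq$-incomparable, then their relabelling regions $\{s \mid t_i c_i \sqsubseteq s\}$ (where $t_i c_i$ is the unique child of $t_i$) and the strings touched by the colour clauses are disjoint, so the two steps commute. If instead $t_1 \sqsubset t_2$, then, since $t_1$ is unary, $t_1 c_1 \sqsubseteq t_2$; applying the $t_1$-step relabels $t_2$ to $t_2\unsubst{t_1 c_1}{t_1}$, which one checks is still an applicable witness of the same variant, and a short binary-string substitution computation shows that collapsing $t_1$ then this relabelled witness has the same effect on all annotations as collapsing $t_2$ then $t_1$. (Alternatively one could bypass Newman's Lemma and give a closed-form description of the normal form -- contract all unary edges, keeping the upper label -- and check that every maximal reduction sequence reaches it.) Finally, step~4 halts exactly when no unary node remains, i.e.\ with a binary tree, so $\delta(S,y)$ is again a macrostate.

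The hardest part, and where I expect to spend most effort, is the colouring half of the local-confluence check. The subtlety is that the colour clauses must be read against the tree \emph{after} the relabelling within the current step (this is what the phrasing ``for any $s \in T$'', with $T$ already updated, dictates): when a unary $t$ is contracted, the child $tc$ merges into $t$ and is \emph{not} reddened, whereas the strict descendants of $t$ in the new tree are. I would fix this reading precisely and then verify, in each case of the confluence diagram, that the sets of strings finally coloured $\red$, respectively $\green$, coincide -- using that $\red$ is sticky ($\green$ never overwrites $\red$, and no string is ever un-reddened) and that ``strictly below a contracted node'' and ``trailing-zero ancestor of a variant-(b) witness'' are stable under the relabellings involved.

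Item~(2) is then immediate: during step~4 the value $c(\epsilon)$ is initialised to $\white$ and modified only by the explicit colour clauses; every clause that assigns $\red$ does so to strings $s$ with $t \sqsubset s$ for the current witness $t$, and $t \sqsubset \epsilon$ never holds, so $c(\epsilon)$ is never set to $\red$ (it may become $\green$, via the variant-(b) clause when the witness equals $0\cdots 0$). Relabelling does not interfere, since $c$ is a function on strings and the string $\epsilon$ -- a prefix of every string in play -- is never relabelled away; the relabelling changes only which state sits at a given string, not the value of $c$. For item~(3): by item~(1) the tree $T^S$ of every reachable macrostate is a binary tree (true for the initial $\{\epsilon\}$ and preserved by $\delta$); it has $|\leaves(T^S)| = |\ran(f)| \leq |B_S| \leq |B|$ leaves, and a binary tree with $\ell$ leaves has $\ell - 1$ internal nodes and hence depth at most $\ell - 1$, so every string in $\ran(f)$ has length at most $|B| - 1 \leq |B|$.
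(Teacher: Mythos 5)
Your proposal is correct, and items (2) and (3) essentially coincide with the paper's proof (the paper's argument for (2) is exactly your observation that \red{} is only ever assigned to strings $s$ with $t \sqsubset s$, and for (3) it likewise bounds string lengths by the depth of a binary tree with at most $|B|$ leaves). Where you genuinely diverge is in the order-independence part of item (1): the paper does not argue via confluence of the one-step rewriting at all, but instead gives a closed-form description of the normal form and its colouring, and asserts that the iterative procedure computes it. Concretely, it takes the quotient $T^E$ of $T^{S'}$ under the equivalence $\sim$ generated by ``$t$ is the unique child of $s$'', equips $T^E$ with the induced parent and sibling orders, and colours a class \red{} iff it has a (proper) ancestor class of size $>1$, and \green{} iff it is not \red{} and some class on its leftmost (all-zero) descendant path absorbed an extra $1$. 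This is exactly the alternative you mention in parentheses. The trade-off is real: the paper's route gives the unique outcome by construction and makes the eventual acceptance argument easier to read off, but the claim that every reduction order lands on this description is left as ``easily seen''; your Newman's-Lemma route is more laborious (the colour half of local confluence, in particular the subcase where $t_2$ is the unique child of $t_1$ and the greened node $t_1c_1$ is subsequently merged into $t_1$, needs the care you flag), but once the local diagrams are checked it is a fully rigorous and self-contained argument, and your reading of the colour clauses against the already-relabelled tree agrees with the paper's quotient semantics. Either way the content of the check is the same: \red{} is sticky, the \green{}-set of a witness $t$ is the upward-closed all-zero chain ending at $t$, and these sets interact with the relabellings exactly as you describe.
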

\begin{proof}
\ref{prop.deltaProp.wellDefined}. After steps 1-3 of the transition function $S'$ consists of a finite set of binary strings such that for all $s \neq t$ in $S'$ it holds $s\not\sqsubseteq t$. Thus $T^{S'}$ describes a tree, where every node has at most two children and the leaves are labeled by disjoint sets of states. Step 4 of $\delta$ identifies a node $t$ with its child $t'$, if $t'$ is its unique child. This results in a unique binary tree $T'$. It remains to show that the colouring of $T'$ does not depend on which nodes are identified first.
Therefore we will give an equivalent presentation of step 4 of $\delta$. Let $\sim$ be the equivalence relation on $T^{S'}$ generated by all pairs of nodes $(s,t)$ such that $t$ is the unique child of $s$. Let $T^E$ be the quotient of $T^{S'}$ over $\sim$, i.e. $T^E = \{[s]_{\sim} \| s \in T^{S'}\}$. We can define a parent relation on $T^E$ as follows: $[s]_\sim$ is the parent of $[t]_\sim$ iff $s \not\sim t$ and there is $s'\sim s$ and $t'\sim t$ such that $s'$ is the parent of $t'$. If $[s]_\sim$ and $[t]_\sim$ are siblings in $T^E$, then there are $s' \sim s$ and $t' \sim t$ such that $s'$ and $t'$ are siblings in $T^{S'}$, hence they inherit an order on the siblings. Thus $T^E$ is a binary tree, which can be given as a set of binary strings. We can define a colouring map $c$ on $T^E$ as follows: $[s]_\sim$ is coloured red if it has an ancestor $[t]_\sim$ with $|[t]_\sim| > 1$. An equivalence class $[s]_\sim$ is coloured green if it is not red and it has a minimal descendant (i.e. every child in the ancestor path is the minimal child wrt. the sibling order) $[t]_\sim$ such that there are $t' \sim t'' \sim t$ where more $1$s are occurring in $t'$ than in $t''$. All other nodes are marked white. It can be easily seen that the coloured binary tree $T^E$ is isomorphic to $T^{S'}$ after step 4 independent of which nodes are identified first.

\ref{prop.deltaProp.epsilon}. The string $\epsilon$ is not a real superstring of any other string, thus it can never be marked red.
\ref{prop.deltaProp.Size} follows, as the length of a path from the root to a leaf is bounded by the size of the tree.
\end{proof}

\setcounter{theorem}{\getrefnumber{thm.CorrectnessNBtoDR}}
\addtocounter{theorem}{-1}
\begin{theorem}
	$\bbB$ accepts a word $w$ iff $\bbB^D$ accepts $w$.
\end{theorem}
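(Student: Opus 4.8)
The plan is to establish the two directions of the equivalence separately, in both cases by exhibiting a correspondence between accepting runs of $\bbB$ on $w$ and accepting runs of $\bbB^D$ on $w$. Since $\bbB^D$ is deterministic, there is a unique run $(S_i)_{i\in\omega}$ of $\bbB^D$ on $w$, with $S_0 = b'_I$ and $S_{i+1} = \delta(S_i, w_i)$. The key bookkeeping device will be the \emph{profile tree} of $\bbB$ on $w$ (the pruned run tree described before the construction): I would first recall, or reprove, the fact from \cite{Fogarty2013,Fogarty2015} that $\bbB$ accepts $w$ iff the profile tree has a branch with infinitely many $1$s. The heart of the argument is then a bridging lemma stating that at each stage $i$, the macrostate $S_i$ is (isomorphic to) the compressed version of the profile tree of $\bbB$ on $w_0\cdots w_{i-1}$, where compression iteratively identifies a node with its unique child, leaves are labelled by the sets of states reachable along the corresponding profile-tree branch, and the colouring records which compression steps absorbed a $1$ (green) versus which nodes were structurally destroyed (red).

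I would prove this bridging lemma by induction on $i$. The base case is immediate since $b'_I = \{b_I^\epsilon\}$ corresponds to the one-node profile tree. For the inductive step, one checks that the four steps of $\delta$ — Move, Append, Resolve, Compress/Colour — exactly implement ``add one level of the run tree, then prune to the profile tree (discard $\emptyset$-labels via Resolve of duplicated states, keep lexicographically greatest copies), then re-compress''. The Append step encodes the run-tree convention that visiting a state in $F$ appends a $1$; the Resolve step encodes profile-tree pruning (only the lexicographically greatest occurrence of a state survives); and the Compress/Colour step re-establishes the binary-tree shape while marking green precisely when a $1$ is absorbed into a surviving node and red precisely the descendants of a node that loses its minimal child (so that the string ceases to track a genuine profile branch). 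Here I would lean on Proposition~\ref{prop.deltaProp}, which guarantees that step~4 is well-defined and that $\epsilon$ is never red.

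Given the bridging lemma, both directions follow by a tracking/limit argument on binary strings. For the forward direction, suppose $\bbB$ accepts $w$; take a branch $\beta$ of the profile tree with infinitely many $1$s. Each prefix of $\beta$ corresponds, via the lemma, to a node in play in the matching $S_i$; as $i$ grows, the string coordinate of this node in $S_i$ stabilises on every finite prefix (because destruction of an ancestor would make the coordinate red, contradicting that $\beta$ persists), so there is a string $s$ that is in play cofinitely often, and each of the infinitely many $1$s along $\beta$ is eventually absorbed into $s$ by a green-marking compression, while no ancestor of $s$ is ever destroyed, so $s$ is red only finitely often — exactly the Rabin acceptance condition of $\bbB^D$. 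Conversely, if $\bbB^D$ accepts $w$ via a string $s$ that is in play cofinitely often, green infinitely often, red finitely often, then after some stage $s$ names a stable compressed node whose underlying profile-tree subtree-branch never gets destroyed and into which a $1$ is absorbed infinitely often; unravelling the compression (using König's lemma on the finitely-branching profile tree, or directly following the minimal-descendant bookkeeping in the alternative presentation of step~4 in Proposition~\ref{prop.deltaProp}) produces an actual branch of the profile tree with infinitely many $1$s, hence $\bbB$ accepts $w$.

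\textbf{Main obstacle.} I expect the delicate point to be the precise match between the colouring rules in step~4(a)/4(b) of $\delta$ and the semantic meaning of green/red in terms of the profile tree — in particular, verifying that ``green'' is recorded on exactly the node that should be credited with a $1$-absorption (the case analysis on whether the collapsed child was the minimal child, and the interaction with a node that is simultaneously eligible to be marked red) and that the order-independence of step~4 (Proposition~\ref{prop.deltaProp}(1)) is genuinely compatible with these credits. Equivalently, the hard part is showing that the string that witnesses acceptance of $\bbB^D$ can be ``unrolled'' back through arbitrarily many compression steps into a single infinite profile branch without losing the infinitely-many-$1$s property; this is where the alternative quotient-based description of step~4 given in Proposition~\ref{prop.deltaProp} should do most of the work.
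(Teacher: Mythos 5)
Your overall architecture is genuinely different from the paper's: you route the equivalence through the profile tree, a bridging lemma identifying each macrostate $S_i$ with the compressed profile tree up to level $i$, and the Fogarty--Kupferman--Vardi characterisation of acceptance. The paper never formalises the profile tree; it argues directly about an accepting run $b_0b_1\ldots$ of $\bbB$ and the coordinates $s_n$ with $b_n^{s_n}\in S_n$ (forward direction), and, conversely, builds a tree of contributing runs inside the macrostates and applies K\"onig's lemma. Your plan could in principle be made to work, and your ``main obstacle'' paragraph correctly locates where the difficulty lives, but as written the forward direction contains a genuine gap.

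The problem is the claim that ``the string coordinate of this node in $S_i$ stabilises on every finite prefix,'' from which you conclude the existence of the witnessing string $s$. This is false in general: the coordinate $s_n$ of the surviving run can oscillate forever. Concretely, it can happen that $t'1\sqsubseteq s_n$ and $t'0\sqsubseteq s_{n+1}$ for infinitely many $n$ (the right child is repeatedly created and then compressed away by step 4(b)); the first bit after $t'$ never stabilises, and it is precisely this non-stabilisation that produces the infinitely many $\green$ marks. The parenthetical justification (``destruction of an ancestor would make the coordinate red'') does not rule this out, since step 4(b) changes the coordinate while marking the absorbing node $\green$, not $\red$. The correct limit object is not a stabilised coordinate but the \emph{maximal} string $t$ that is a prefix of cofinitely many $s_n$ \emph{and} is marked $\red$ only finitely often (which exists because $\epsilon$ qualifies, by Proposition~\ref{prop.deltaProp}); one must then prove, by a case analysis on the maximal $t'=t0\cdots0$ with $t'\sqsubseteq s_n$ cofinitely often (namely: $s_n=t'$ infinitely often, $t'1\sqsubseteq s_n$ cofinitely often, or both $t'0$ and $t'1$ occur as prefixes infinitely often), that $t$ is marked $\green$ infinitely often. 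That case analysis is the substance of the forward direction and is missing from your sketch. Symmetrically, in the converse direction the ``unravelling'' must select the \emph{lexicographically minimal} infinite branch of the K\"onig tree, so that everything to its left is finite, the coordinate eventually becomes $t0\cdots0$, and the next $\green$ mark on $t$ forces a visit to $F$; without this minimality choice the infinitely-many-$1$s property can be lost to branches that are eventually abandoned.
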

\begin{proof}
	Let $w = y_0y_1... \in \Sigma^{\omega}$ and $\rho^S = S_0S_1S_2...$ be the run of $\bbB^D$ on $w$. 
	
	Suppose there is an accepting run $\rho = b_0b_1b_2..$ of $\bbB$ on $w$. Let $s_n$ be the binary string such that $b_n^{s_n} \in S_n$ for $n \in \omega$. Let $t$ be the maximal string which is a substring of cofinitely many $s_n$ and only marked red finitely often. 
	Note that such a string exists, as $\epsilon$ always satisfies these conditions due to Proposition \ref{prop.deltaProp}.\ref{prop.deltaProp.epsilon}. By definition $t$ is in play cofinitely often and red only finitely often, we will show that $c(t)$ is green infinitely often. Let $t'$ be the maximal string of the form $t0\cdots0$, such that $t' \sqsubseteq s_n$ for cofinitely many $n$. Let $N$ be such that $t'\sqsubseteq s_n$ and $c(t) \neq \red$ in $S_n$ for all $n > N$. 
	
	Now we distinguish the following cases: $t'=s_n$ for infinitely many $n$ (first case), $t'1 \sqsubseteq s_n$ for cofinitely many $n$ (second case) and $t'1 \sqsubseteq s_n$ for infinitely many $n$ and $t'0 \sqsubseteq s_n$ for infinitely many $n$, while $t' = s_n$ for only finitely many $n$ (third case). These three cases cover all possibilities, as $t'$ is the maximal string of the form $t0\cdots0$, such that $t' \sqsubseteq s_n$ for cofinitely many $n$. 
	
	First assume that $s_n = t'$ for infinitely many $n$. Let $m > N$, such that $s_{m} = t'$. As $\rho$ is an accepting run, $1$ will be appended to $s_n$ at step 2 of $\delta$ for some  $n > m$. This $1$ will need to be removed at some possibly later stage, as $s_n = t'$ infinitely often. But the $1$ can only get removed in step 4(b), which means that $t$ is marked green, as $c(t)$ is never red. Thus $c(t) = \green$ infinitely often.
	
	Secondly let $t'1\sqsubseteq s_n$ for cofinitely many $n$. Let $N_1 > N$, such that $t'1\sqsubseteq s_n$ for all $n > N_1$. The definition of $t$ implies that $t'1$ is marked red infinitely often. This can only happen if there is a witness $r \sqsubseteq t'$ in step 4 of $\delta$. As $t$ is never marked red it follows that $t \sqsubseteq r$. If $t \sqsubseteq r \sqsubset t'$, then we are in step 4(a) and $s_n$ is replaced by $s_n\unsubst{r0}{r}$. Yet $t'\not\sqsubseteq s_n\unsubst{r0}{r}$, which contradicts our assumption. Thus the witness in step 4 of $\delta$ has to be $t'$. In this case $t'1 \in T^{S_n}$ and $t'0 \notin T^{S_n}$, thus $t$ is marked green.
	
	Thirdly consider the case where $t'0 \sqsubseteq s_n$ for infinitely many $n$ and $t'1 \sqsubseteq s_n$ for infinitely many $n$, while $s_n = t'$ for only finitely many $n$. Then it holds for infinitely many $n > N$, that $t'1 \sqsubseteq s_n$ and $t'0 \sqsubseteq s_{n+1}$. As $s_{n+1} < s_n$, this is only possible if $t'$ is the witness in step 4(b) of the transition function, i.e. $t'1 \in T$ and $t'0 \notin T$, which means that $t$ is marked green as it is never marked red.

	Thus in every case $t$ is marked green infinitely often and the first direction is proven.

	\bigskip

	Conversely, suppose that there is a binary string $t$ which is in play cofinitely often and which is labeled green infinitely often and red only finitely often. Let $N$ be such that $t$ is in play and never marked red for any $n \geq N$. For $i \geq N$ we define 
	\[
	A_i = \{b^s \in S_i ~|~ t\sqsubseteq s\}.
	\]
	We first show
	\begin{align}
		&\text{For all } b^s \in A_N\text{ there is a path from }b_I\text{ to }b\text{ in }\bbB\text{ on input } y_0\cdots y_{N-1}.\label{stat.binProofInitial}\\
		&\text{For all }i > N\text{ and }b^{s_b} \in A_{i+1}\text{ there exists }a^{s_a} \in A_i\text{ such that }b \in \Delta(a,y_i)\label{stat.binProofSucc}
	\end{align}
	Statement (\ref{stat.binProofInitial}) follows, as the transition function is just a refined version of a macro-move. For (\ref{stat.binProofSucc}) let $b^{s_b} \in A_{i+1}$. Due to step 1 of the transition function there is $a^{s_a} \in S_i$ with $b \in \Delta(a,y_i)$. We choose $a^{s_a}$ with that property such that $s_a$ is maximal and claim that $t \sqsubseteq s_a$. To see that we take a look at $\delta(S_i,y)$. After step 2 there is $b^{s_a0}$ or $b^{s_a1}$ in $S'_{i}$, which will not be removed in step 3 as we chose the maximal $s_a$. Now let $r$ be maximal such that $r \sqsubseteq s_a$ and $r \sqsubseteq t$. Note that $s_a \sqsubset t$ can not hold, because this would imply $A_i = \varnothing$. If $r=t$ we are done. Else $r$ has to be a witness in step 4 of the transition function, as $t \sqsubseteq s_b$. As $s_a \in T^{S'_i}$ this implies that $t \notin T^{S'_{i}}$. Yet $t \in T^{S_{i+1}}$, but this is only possible if $c(t) = \red$ at the end of step 4, which contradicts our definition of $t$.
	
	\medskip
	
	We are now able to define the trace tree $\calT^\bbB$. It will consist of the root $b_I$ and nodes $(a^{s_a},i)$, where $a^{s_a} \in A_i$ for $i \geq N$. We define a partial order $<_{\calT}$ on the nodes of $\calT^{\bbB}$ in the following way: $(a^{s_a},i) <_{\calT} (b^{s_b},i)$ iff $s_a < s_b$. The parent of $(a^{s_a},N)$ is $b_I$. For $i \geq N$ the unique parent of $(b^{s_b},i+1)$ is an element $(a^{s_a},i)$ such that $b \in \Delta(a,y_i)$ which is maximal with respect to $<_{\calT}$. Such an element always exists due to (\ref{stat.binProofSucc}), if there exist more than one, choose one of them. 
	
	The trace tree $\calT^\bbB$ is an infinite, finitely branching tree. So by König's Lemma there exists an infinite branch $b_I(a^{s_N}_N,N)(a^{s_{N+1}}_{N+1},N+1)...$. We let $\rho^{\mathcal{T}}$ be the infinite branch such that the infinite string $s_Ns_{N+1}...$ is minimal with respect to the lexicographical order. Due to (\ref{stat.binProofInitial}) there exists a path $\rho'$ from $b_I$ to $a_N$ in $\bbB$ on input $y_0...y_{N-1}$. Combining that with (\ref{stat.binProofSucc}) we obtain that $\rho = \rho'a_Na_{N+1}...$ is a run of $\bbB$ on $w = y_0y_1...$.

	It remains to show that $\rho$ is successful, i.e. that $a_j \in F$ for infinitely many $j \geq N$. To do so, assume that there is $m \geq N$, such that $a_{i} \notin F$ for all $i \geq m$. We have $a_i^{s_i} \in A_i$ for $i \geq m$, where $s_m = tr_m$ for some $r_m$. Let $r_i$ be minimal such that $s_i = tr_i0\cdots0$ for $i > m$. As $a_i \notin F$ for all $i \geq m$, only zeros are added to $s_i$, hence the length of $r_i$ is decreasing. Now $\rho^\calT$ is the minimal infinite path in $\calT^\bbB$, thus all paths left of $\rho^{\calT}$ are finite, which implies that at some point $k \geq m$ it holds $s_k = t0\cdots0$. Now the next time when $t$ gets labeled green, $t0\cdots0$ has to be the witness in step 4(b), which is only possible if $a_j \in F$ for some $j \geq k$.
	
\end{proof}

\begin{lemma}\label{lem.complexityNBtoDR}
	For a Büchi automaton with $n$ states, the automaton $\bbB^D$ has $n^{\calO(n)}$ macrostates and the Rabin condition consists of $\calO(2^n)$ pairs.
\end{lemma}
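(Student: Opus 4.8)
The plan is to bound the two quantities — the number of macrostates and the number of Rabin pairs — separately, essentially by pure counting; the only real ingredient beyond arithmetic is the length bound of Proposition~\ref{prop.deltaProp}(\ref{prop.deltaProp.Size}).

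For the number of macrostates, the key point is that a macrostate $S = (B_S, f, c)$ is entirely determined by the binary tree $T^S = \tree(\ran(f))$ together with the map $f$ (now viewed as a surjection of $B_S$ onto $\leaves(T^S)$) and the colouring $c$, and that the tree $T^S$ is \emph{small}. Indeed, writing $n = |B|$, Proposition~\ref{prop.deltaProp}(\ref{prop.deltaProp.Size}) bounds the length of each string in $\ran(f)$ by $n$, and $|\ran(f)| \le |B_S| \le n$; since $T^S$ is a full binary tree (every node has $0$ or $2$ children) with at most $n$ leaves, it has at most $2n-1$ nodes. Hence I would count: at most $\sum_{k \le n} C_{k-1} = 2^{\calO(n)}$ shapes for $T^S$ (where $C$ denotes the Catalan numbers, using $C_{k-1} \le 4^{k-1}$); at most $2^n \cdot n^n = n^{\calO(n)}$ choices for a subset $B_S \subseteq B$ together with a function $f$ from $B_S$ into the at most $n$ leaves of the fixed tree (bounding surjections by arbitrary functions); and at most $3^{2n-1} = 2^{\calO(n)}$ colourings $c$. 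Multiplying, $|B^D| \le n^{\calO(n)}$.

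For the Rabin condition, I would take one pair $(G_s, B_s)$ for every binary string $s$ with $|s| \le n$; there are $2^{n+1}-1 = \calO(2^n)$ of these, and by Proposition~\ref{prop.deltaProp}(\ref{prop.deltaProp.Size}) no longer string is ever in play (every node of any $T^S$ is an initial substring of a leaf, hence of length $\le n$), so such strings can be ignored. Put $G_s := \{\,S \in B^D \mid c_S(s) = \green\,\}$ and $B_s := \{\,S \in B^D \mid s \notin T^S \text{ or } c_S(s) = \red\,\}$. Then, along a run $S_0 S_1 S_2 \dots$, visiting $G_s$ infinitely often is literally ``$c_{S_i}(s) = \green$ for infinitely many $i$'', while visiting $B_s$ only finitely often says exactly that $s$ is in play cofinitely often and coloured $\red$ at only finitely many $i$. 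Thus the Rabin condition $\big((G_s, B_s)\big)_{|s| \le n}$ is satisfied by a run iff some string $s$ is in play cofinitely often, $\green$ infinitely often and $\red$ finitely often — precisely the acceptance condition of $\bbB^D$. This gives the bound of $\calO(2^n)$ pairs.

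The main obstacle is really just the organization of the macrostate count: a naive bound that chooses each string in $\ran(f)$ independently from the $\calO(2^n)$ strings of length $\le n$ only yields $2^{\calO(n^2)}$, which is too weak; one genuinely needs to count tree \emph{shapes} first (polynomially many nodes, hence singly-exponentially many shapes) and then place the $\le n$ states onto the $\le n$ leaves. The other point to get right is the encoding of ``in play cofinitely often'' into a single Rabin pair: one must use that when $s \notin T^S$ the value $c_S(s)$ is simply undefined (not implicitly \white), so that putting such $S$ into $B_s$ correctly forces $s$ eventually into play without spuriously affecting the $\green/\red$ bookkeeping.
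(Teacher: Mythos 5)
Your proof is correct and follows essentially the same route as the paper: count macrostates by tree shape (Catalan numbers), leaf-labelling by states, and colouring, and bound the Rabin pairs by the number of binary strings of length at most $n$ via Proposition~\ref{prop.deltaProp}(\ref{prop.deltaProp.Size}). The only difference is that you additionally spell out the explicit pairs $(G_s,B_s)$ realizing the acceptance condition, which the paper leaves implicit.
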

\begin{proof}
	The number of binary trees with $k+1$ leaves $C_k$ is called the $k$-th Catalan number. It holds that $C_k = \frac{1}{k+1} \binom{2k}{k} \leq 2^{2k}$. A binary tree with $k$ leaves has $2k-1$ nodes, thus there are $3^{2k-1}$ possible colouring maps $c$. The $k$ leaves can be labeled by disjoint subsets of $n$ in $(k+1)^n$ different ways, in particular there are at most $(k+1)^n$ possibilities to label $k$ leaves with disjoint nonempty subsets of $n$. A macrostate can be represented by a labeled, coloured binary tree of size $k$, where $k$ ranges from $0,...,n$. In total we have
	\begin{align*}
		|B^D| &\leq \sum_{k=0}^{n} C_{k-1}\cdot 3^{2n-1} \cdot(k+1)^n\\
		&\leq 6^{2n-1}\cdot (n+1)^{n+1} = n^{\calO(n)}.
	\end{align*}
	The number of Rabin pairs is the number of binary strings, which may occur in a macrostate. Due to Proposition \ref{prop.deltaProp}.\ref{prop.deltaProp.Size} this is $\sum_{k=0}^n 2^k \leq 2^{n+1}$.
\end{proof}

\setcounter{theorem}{\getrefnumber{thm.correctnessNPtoDR}}
\addtocounter{theorem}{-1}
\begin{theorem}
	Let $\bbA$ be a parity automaton and ${\bbA}^D$ the deterministic Rabin automaton defined from $\bbA$. Then $L(\bbA) = L({\bbA}^D)$.
\end{theorem}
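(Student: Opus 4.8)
The plan is to reduce Theorem~\ref{thm.correctnessNPtoDR} to the already-established Büchi case, Theorem~\ref{thm.CorrectnessNBtoDR}, by exploiting the intermediate Büchi automaton $\bbB$ constructed from $\bbA$ at the start of Section~\ref{sec.sub.parityAut}. Two facts must be assembled. First, a language-level equivalence $L(\bbA)=L(\bbB)$: this is the standard ``latest appearance record''-style translation of a parity automaton into a Büchi automaton, where the copies $\bbA_k$ guess that from some point onward the priority stays $\ge k$, infinitely often hitting priority exactly $k$. One shows a word is accepted by $\bbA$ iff some accepting run eventually enters one of the $\bbA_k$ blocks and visits $F_k$ infinitely often; the forward direction takes an accepting $\bbA$-run, lets $k$ be the minimal priority seen infinitely often (even, by acceptance), picks a point after which no smaller priority occurs, and switches into $\bbA_k$ there; the converse projects an accepting $\bbB$-run back to $\bbA$ and checks the parity condition. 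Second — and this is the real content — a correspondence between $\bbA^D$ and the Büchi determinization of the pieces $\bbA\cup\bbA_k$: the macrostates of $\bbA^D$ are designed so that the $k$-th component $(\pi_k\circ f, c_k)$ behaves exactly like the binary-tree macrostate one would get by running the construction of Section~\ref{sec.det} on $\bbA\cup\bbA_k$, and the acceptance condition of $\bbA^D$ (``some $k$ and some string $s$ in play at position $k$ cofinitely often, green infinitely often, red finitely often'') is precisely ``$\bbB^D$-style acceptance holds in one of these components''.

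Concretely I would proceed as follows. \textbf{Step 1.} Prove $L(\bbA)=L(\bbB)$ by the argument sketched above, isolating it as a small lemma (this is routine and can be done in the appendix). \textbf{Step 2.} Analyze a run $\rho^S = S_0 S_1 S_2 \dots$ of $\bbA^D$ on a word $w$, and for each fixed even $k\le m$ extract the ``$k$-th layer'': the sequence of pairs $(\{a\in A_{S_i} \mid \Omega(a)\ge k\},\ \pi_k\circ f_i,\ c_k^i)$. Here the \underline{Reduce} step and the fact that in \underline{Append} the entry $s_k$ is the only coordinate touched non-trivially at a state of priority exactly $k$ (entries $s_j$ with $j>k$ being overwritten by $\minL(T_j)$) are exactly what guarantees that this layer is insensitive to the other coordinates and evolves according to the \underline{Move}/\underline{Append}/\underline{Resolve}/\underline{Compress} steps of the Büchi construction applied to $\bbA\cup\bbA_k$, with the Büchi-accepting states being $F_k$. \textbf{Step 3.} Show the $k$-th layer of $\rho^S$ is (up to the harmless presence/absence of the minimal leaf $0\cdots0$, which is what ``treetop'' vs.\ ``set of leaves of a binary tree'' accommodates) the run of $(\bbA\cup\bbA_k)^D$ on $w$. \textbf{Step 4.} Observe that $\bbA^D$ accepts $w$ iff for some $k$ the $k$-th layer satisfies the $\bbB^D$-acceptance condition, which by Step~3 and Theorem~\ref{thm.CorrectnessNBtoDR} (applied to $\bbA\cup\bbA_k$) happens iff $\bbA\cup\bbA_k$ has an accepting run on $w$, i.e.\ iff some copy $\bbA_k$ witnesses acceptance — which by the construction of $\bbB$ and Step~1 is equivalent to $w\in L(\bbB)=L(\bbA)$.

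Alternatively, and perhaps more robustly, one can run the whole correctness proof directly rather than through $\bbB$: mimic the two halves of the proof of Theorem~\ref{thm.CorrectnessNBtoDR}. For the ``only if'' direction, given an accepting $\bbA$-run $\rho$ with minimal infinitely-occurring priority $k$ (even), track the $k$-th coordinate $s_n^{(k)}$ of the annotation carried by the $n$-th state of $\rho$ in $S_n$, take $t$ maximal among strings that are a substring of cofinitely many $s_n^{(k)}$ and red only finitely often at position $k$, and run the same three-case analysis (stabilizes at $t0\cdots0$; goes right cofinitely; oscillates) to show $c_k(t)$ is green infinitely often — using that states of priority $<k$ stop appearing, so the \underline{Reduce} step eventually stops disturbing coordinate $k$. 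For the ``if'' direction, given $k$ and $t$ in play at position $k$ cofinitely often, green infinitely often, red finitely often, build a trace tree of states whose $k$-th coordinate extends $t$, apply König's Lemma to get an infinite branch, take the lexicographically minimal such branch, extract an $\bbA$-run $\rho$ along it, and argue the minimal priority seen infinitely often on $\rho$ is $k$: it is $\ge k$ because coordinate $k$ only shrinks/stabilizes while $<k$-priority states would force \underline{Reduce} to reset it, and it is $\le k$ because, as in the Büchi proof, the minimal-branch argument forces $s_n^{(k)}=t0\cdots0$ at some point, and the next greening of $t$ at position $k$ requires a priority-$k$ state to have been visited meanwhile.

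\textbf{The main obstacle} I anticipate is Step~2/3 (equivalently, the bookkeeping in the direct proof): making precise that the per-coordinate dynamics of $\delta_A$ genuinely decouple into independent Büchi determinization runs. The subtle points are (i) the role of \underline{Reduce}, which at a state of priority $\Omega(a)$ overwrites all higher coordinates $s_j$, $j>\Omega(a)$, by $\minL(T_j)$ — this is what makes a high-$k$ layer ``forget'' excursions through lower-priority states and is the parity analogue of the Büchi accepting-set marking; (ii) the \underline{Resolve} order: the partial order $<$ on $\TSeq(m)$ is lexicographic-with-a-twist (it compares at the \emph{first} differing coordinate, looking only at coordinates $\le l$), so one must check that deleting $a^\sigma$ when $\sigma<\tau$ in $\bbA^D$ restricts, on each coordinate $k$, to exactly the $<$-comparison of binary strings used in \underline{Resolve} of the Büchi construction — this requires the \underline{Reduce} normalization to have already synchronized the higher coordinates; and (iii) the treetop-vs-tree mismatch (the missing minimal leaf) caused by the fact that in \underline{Append} priority-$k$ states append $1$ to coordinate $k$ but do not create the sibling $0$-node unless another state supplies it, and the corresponding side condition $t\neq 0\cdots0$ in step~4(b) and in $\RuCompress[k][s1]$. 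Once these three alignments are nailed down, the theorem falls out of Theorem~\ref{thm.CorrectnessNBtoDR} and Step~1 with essentially no further work.
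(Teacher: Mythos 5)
Your ``alternative'' direct argument is, almost verbatim, the proof the paper gives: take the minimal priority $k$ occurring infinitely often on an accepting $\bbA$-run, track only the $k$-th coordinate of the annotations, choose $t$ maximal among strings that are a substring of cofinitely many of these coordinates and \red{} only finitely often, observe $t\neq 0\cdots 0$, and rerun the three-case analysis of Theorem~\ref{thm.CorrectnessNBtoDR}; conversely, from a persistent $(k,t)$ define $A_i=\{a^{\sigma}\in S_i\mid t\sqsubseteq \sigma_k\}$, prove the auxiliary facts (including $\Omega(a)\geq k$ for all members of $A_i$, via \underline{Reduce} and $t\neq 0\cdots0$), and apply K\"onig's Lemma to the leftmost infinite branch. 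So on that route you are in full agreement with the paper.

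Your \emph{primary} plan, however, has a genuine gap at Step~3, exactly at your subtle point~(ii), and your proposed fix does not close it. The claim that the $k$-th layer of a run of $\bbA^D$ \emph{is} the run of the B\"uchi determinization of $\bbA\cup\bbA_k$ fails because \underline{Resolve} in $\bbA^D$ compares whole tuples lexicographically: if $a^{\sigma}$ and $a^{\tau}$ differ first at coordinate $l<k$, the occurrence with the smaller $l$-th coordinate is deleted even when its $k$-th coordinate is the lexicographically \emph{larger} one, whereas the layer-$k$ B\"uchi construction would keep the larger $k$-th coordinate. The \underline{Reduce} normalization only synchronizes coordinates $j>\Omega(a)$ of the two occurrences; coordinates strictly between $l$ and $\Omega(a)$ remain unsynchronized, so the layers genuinely diverge and Theorem~\ref{thm.CorrectnessNBtoDR} cannot be invoked for $\bbA\cup\bbA_k$ off the shelf. (This is presumably why the paper presents the decomposition into the automata $\bbA\cup\bbA_k$ only as intuition and then proves correctness directly.) If you want to salvage the reduction, you would have to replace exact layer equality by a simulation statement weak enough to survive the global \underline{Resolve} --- at which point you have essentially reconstructed the direct proof anyway. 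I recommend writing up the direct version.
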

\begin{proof}
	This proof follows the same lines as the proof of Theorem \ref{thm.CorrectnessNBtoDR}. We will focus on the differences in the case of parity automata and omit some steps which are analogous to the Büchi case.

	Let $w = y_0y_1... \in \Sigma^{\omega}$ and $\rho^S = S_0S_1S_2...$ be the run of $\tilde{\bbA}^D$ on $w$. 
	
	Suppose there is an accepting run $\rho = a_0a_1a_2..$ of $\bbA$ on $w$ and let $k \in \{0,...,m\}$ be minimal such that $\Omega(a_j) = k$ for infinitely many $j \in \omega$. Because $\rho$ is accepting, $k$ is even. Let $s_n$ be the binary string such that $a_n^{\sigma_n} \in S_n$ and $s_n = (\sigma_n)_k$ for $n \in \omega$. Let $t$ be the maximal string which is a substring of cofinitely many $s_n$ and $c_k(t)= \red$ only finitely often. 
	Note that $t \neq 0\cdots0$: Assume that $t = 0\cdots0$. As $\rho$ is an accepting run a $1$ is appended to $s_n$ at some point after the last time $c_k(t) = \red$. This $1$ can never get deleted again, and $c_k(t1)$ is never $\red$ because this would imply $c_k(t)$. Thus $t1$ is also a substring of cofinitely many $s_n$ and $c_k(t1) = \red$ only finitely often.
	
	By definition $t$ is in play at position $k$ cofinitely often and $c_k(t)= \red$ only finitely often, we will show that $c_k(t)$ is green infinitely often. Let $t'$ be the maximal string of the form $t0\cdots0$, such that $t' \sqsubseteq s_n$ for cofinitely many $n$. Let $N$ be such that $t'\sqsubseteq s_n$, $c_k(t) \neq \red$ in $S_n$ and $\Omega(a_n) \geq k$ for all $n > N$. 
	
	Now the first direction can be proven the same way as in the proof of Theorem \ref{thm.CorrectnessNBtoDR}.

	\bigskip

	Conversely, suppose that there is $k \in \{0,2,...,m\}$ and a binary string $t$, which is in play at position $k$ cofinitely often such that $c_k(t)$ is $\green$ infinitely often and $\red$ only finitely often, in particular $t \neq 0\cdots0$. 
	Let $N$ be such that $t$ is in play at position $k$ and $c_k(t) \neq \red$ for all $n \geq N$. For $i \geq N$ we define 
	\[
	A_i = \{a^\sigma \in S_i ~|~ t\sqsubseteq \sigma_k\}.
	\]
	We first show
	\begin{align}
		&\text{For all } i > N\text{ and }a^\sigma \in A_i\text{ it holds } \Omega(a) \geq k.\label{stat.binProofArityParity}\\
		&\text{For all } a^\sigma \in A_N\text{ there is a path from }a_I\text{ to }a\text{ in }\bbA\text{ on input } y_0\cdots y_{N-1}.\label{stat.binProofInitialParity}\\
		&\text{For all }i > N\text{ and }b^{\sigma_b} \in A_{i+1}\text{ there exists }a^{\sigma_a} \in A_i\text{ such that }b \in \Delta(a,y_i)\label{stat.binProofSuccParity}
	\end{align}
	For (\ref{stat.binProofArityParity}) assume that $\Omega(a) < k$. Then $\sigma_k$ is replaced by $\minL(T_k)$ at step 1(b) of the transition function, which results in $\sigma_k = 0\cdots0$ at the end of the transition function. Yet $t \neq 0\cdots0$, hence $t \not\sqsubseteq \sigma_k$.
	Statement (\ref{stat.binProofInitialParity}) follows, as the transition function is just a refined version of a macro-move. 
	For (\ref{stat.binProofSuccParity}) let $b^{\sigma_b} \in A_{i+1}$. Due to step 1 of the transition function there is $a^{\sigma_a} \in S_i$ with $b \in \Delta(a,y_i)$. We choose $a^{\sigma_a}$ with that property such that $\sigma_a$ is maximal, let $s_a = (\sigma_a)_k$ and claim that $t \sqsubseteq s_a$. 
	To see that we take a look at $\delta(S_i,y)$. Due to (\ref{stat.binProofArityParity}) after step 2 there is $b^{\sigma}$ in $S'_{i}$, where $\sigma_k = s_a0$ or $\sigma_k = s_a1$. $\sigma$ will not be removed in step 3 as we chose the maximal $\sigma_a$. Now let $r$ be maximal such that $r \sqsubseteq s_a$ and $r \sqsubseteq t$. Note that $s_a \sqsubset t$ can not hold, because this would imply $A_i = \varnothing$. If $r=t$ we are done. Else $r$ has to be a witness in step 4 of the transition function, as $t \sqsubseteq s_b = (\sigma_b)_k$. Because $s_a \in T^{S'_i}_k$ this implies that $t \notin T^{S'_{i}}_k$. Yet $t \in T^{S_{i+1}}_k$, but this is only possible if $c_k(t) = \red$ at the end of step 4, which contradicts our definition of $t$.
	
	\medskip
	
	Using König's Lemma we obtain a run $\rho = \rho'a_Na_{N+1}$ of $\bbA$ on input $w$ analogously as in the proof of Theorem \ref{thm.CorrectnessNBtoDR}.	
	It remains to show that $\rho$ is successful. It suffices to show that $\Omega(a_j) = k$ for infinitely many $j \geq N$ as $\Omega(a_j) \geq k$ for all $j \geq N$ due to (\ref{stat.binProofArityParity}). This can be done as in the proof of Theorem \ref{thm.CorrectnessNBtoDR}.
\end{proof}

\begin{lemma}\label{lem.complexNPtoDR}
	Let $\bbA$ be a parity automaton of size $n$ and highest even parity $m$ and $\bbA^D$ be the deterministic Rabin automaton defined from $\bbA$. Then $\bbA^D$ has $n^{\calO(m\cdot n)}$ macrostates and $\calO(m\cdot 2^n)$ Rabin pairs.
\end{lemma}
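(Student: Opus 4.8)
The plan is to adapt the product estimate from the proof of Lemma~\ref{lem.complexityNBtoDR} to the parity setting; the only genuinely new feature is the factor contributed by the $\frac{m}{2}+1$ coordinates of an annotation.

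First I would establish the parity analogue of Proposition~\ref{prop.deltaProp}.\ref{prop.deltaProp.Size}. A macrostate is a tuple $(A_S,f,c_0,\dots,c_m)$ with $A_S\subseteq A$, hence $|A_S|\leq n$; therefore each set $\ran(\pi_k\circ f)$ has at most $n$ elements, the binary tree $T^S_k=\tree(\ran(\pi_k\circ f))$ has at most $n+1$ leaves and thus depth at most $n$, so every binary string appearing in the $k$-th coordinate of an annotation has length at most $n$. The point requiring care is that the Reduce, Append and Compress steps of $\delta_A$ never produce a string longer than the depth of the current tree, which is where the treetop subtlety---the possible extra node $0\cdots0$---has to be handled.

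Next I would count macrostates directly as tuples. There are at most $2^n$ choices of $A_S$. For each of the $\frac{m}{2}+1$ even $k\leq m$, the tree $T^S_k$ is one of Catalan-many (hence at most $2^{\calO(n)}$) binary trees with at most $n+1$ leaves, and the colouring $c_k\colon\tree(\ran(\pi_k\circ f))\to\{\green,\red,\white\}$ is one of at most $3^{\calO(n)}=2^{\calO(n)}$ maps; over all $k$ this is a factor $2^{\calO(m\cdot n)}$. Once the trees $T^S_k$ are fixed, say with $\ell_k\leq n$ leaves each, the map $f$ is obtained by assigning each of the $n$ states of $\bbA$ either ``outside $A_S$'' or one of the $\ell_0\cdot\ell_2\cdots\ell_m\leq n^{m/2+1}$ leaf-tuples, giving at most $(1+n^{m/2+1})^n\leq n^{\calO(m\cdot n)}$ possibilities---an overcount is harmless since we want only an upper bound. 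Multiplying, $|A^D|\leq 2^n\cdot 2^{\calO(m\cdot n)}\cdot n^{\calO(m\cdot n)}=n^{\calO(m\cdot n)}$, the last factor dominating.

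For the Rabin condition, recall that a run of $\bbA^D$ is accepting exactly when for some pair $(k,s)$ with $k\in\{0,2,\dots,m\}$ and $s\in 2^*$ the string $s$ is in play at position $k$ cofinitely often, coloured $\green$ infinitely often and $\red$ only finitely often; hence one Rabin pair per such $(k,s)$ suffices, with good set $\{S\mid s\in T^S_k,\,c_k(s)=\green\}$ and bad set $\{S\mid s\notin T^S_k\text{ or }c_k(s)=\red\}$. By the length bound of the first step there are, for each of the $\frac{m}{2}+1$ values of $k$, at most $\sum_{i=0}^{n}2^i\leq 2^{n+1}$ relevant strings $s$, so the Rabin condition has at most $(\frac{m}{2}+1)\cdot 2^{n+1}=\calO(m\cdot 2^n)$ pairs. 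The only delicate point in the whole argument is the string-length bound of the first step; everything downstream is bookkeeping exactly parallel to the Büchi case and needs no new ideas.
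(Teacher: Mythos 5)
Your proof is correct and takes essentially the same route as the paper: the paper simply observes that a macrostate of $\bbA^D$ is (up to the treetop technicality) a collection of B\"uchi-style macrostates, one for each even $k\leq m$, and multiplies the bound of Lemma~\ref{lem.complexityNBtoDR}, which is exactly the coordinate-wise product count you carry out explicitly. Your added care about the string-length bound and the count of Rabin pairs per position $k$ mirrors Proposition~\ref{prop.deltaProp} and the B\"uchi case, so no new ideas are needed beyond what the paper already uses.
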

\begin{proof}

	Another way to describe a macrostate in $\bbA^D$ is as a collection of macrostates in the automata $(\bbA \cup \bbA_k)^D$ for $k=0,2,...,m$, where $\bbA \cup \bbA_k$ is the Büchi automaton defined above. Hence Lemma \ref{lem.complexityNBtoDR} yields that there are $n^{\calO(m\cdot n)}$ possible macrostates in $\bbA^D$.
	The number of Rabin pairs is the number of possible binary strings for $k=0,2,...,m$, in total $m \cdot \calO(2^n)$.
\end{proof}

\subsection{\BT proofs}

\begin{definition}
	Let $\pi$ be a \BT proof. The \emph{infinite unfolding} $\pi^*$ of $\pi$ is the \BT pre-proof obtained by recursively replacing every discharged leaf $l$ by the subtree of $\pi$ rooted at the  child node of $c(l)$.
\end{definition}

\begin{lemma}
	There is a \BTInf proof of a sequent $\Gamma$ iff there is a regular \BTInf proof of $\Gamma$.
\end{lemma}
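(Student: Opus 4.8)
Since a regular \BTInf proof is in particular a \BTInf proof, only the left-to-right direction requires work. The plan is to reduce it to the corresponding fact for \NW, which is available: by Theorem \ref{thm.NWSoundCompleteness} (in the form due to Niwi\'{n}ski \& Walukiewicz, Theorem~6.3 of \cite{Niwinski1996}) every \NW-provable sequent has a \emph{regular} \NW proof. The reduction works because the translation between \NW proofs and \BTInf proofs underlying the proof of Lemma \ref{lem.NWiffBT} is, in both directions, essentially deterministic --- this is precisely the sense in which ``annotations are added deterministically''.

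Spelling this out, let $\Gamma$ be an annotated sequent over $\Clos(\Phi)$ with $m = \max_\Omega(\Phi)$, write $\Gamma^{\circ}$ for its un-annotated reduct, and suppose $\rho$ is a \BTInf proof of $\Gamma$. Forgetting all annotations, contracting away the book-keeping applications of \RuResolve and \RuCompress, and reducing every remaining \BT rule to its underlying \NW rule turns $\rho$ into an \NW derivation $\pi$ of $\Gamma^{\circ}$ (this is the construction from the ``conversely'' part of the proof of Lemma \ref{lem.NWiffBT}); by Lemma \ref{lem.NWiffDRautomaton}, $\pi$ is an \NW proof, since every infinite branch of $\rho$ is successful. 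Conversely, given \emph{any} \NW derivation $\sigma$ of $\Gamma^{\circ}$, re-inserting annotations (with the root annotated as in $\Gamma$) and re-inserting the applications of \RuResolve and \RuCompress forced by the rule priority of \BT produces a \BT derivation $\sigma^{\dagger}$ whose underlying \NW derivation is $\sigma$; the only latitude here is the order in which book-keeping rules are applied, which by the confluence behind Proposition \ref{prop.deltaProp}(\ref{prop.deltaProp.wellDefined}) does not affect the resulting derivation, and $\sigma^{\dagger}$ is a \BTInf proof iff $\sigma$ is an \NW proof, again by Lemma \ref{lem.NWiffDRautomaton}.

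Now, from the given \BTInf proof $\rho$ of $\Gamma$ we obtain an \NW proof $\pi$ of $\Gamma^{\circ}$, hence by Theorem \ref{thm.NWSoundCompleteness} a \emph{regular} \NW proof $\pi'$ of $\Gamma^{\circ}$, i.e.\ one with only finitely many distinct subderivations. I claim that $\rho' := (\pi')^{\dagger}$ is a regular \BTInf proof of $\Gamma$. It is a \BTInf proof by the previous paragraph. For regularity, observe that the subderivation of $\rho'$ rooted at a node $v$ is determined by (i) the subderivation of $\pi'$ rooted at the node from whose translation $v$ arises, together with (ii) the annotation decorating $v$: from these two data the deterministic reinsertion procedure reconstructs all of $\rho'$ below $v$. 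Data of type (i) range over a finite set because $\pi'$ is regular; data of type (ii) range over a finite set because every annotation in $\rho'$ is (a component tuple of) a macrostate of the determinized tracking automaton $\bbA^{D}$ of Section \ref{sec.det}, of which there are only finitely many --- e.g.\ the binary strings involved have length at most $|\Clos(\Phi)|$, cf.\ Proposition \ref{prop.deltaProp}(\ref{prop.deltaProp.Size}) and Remark \ref{rem.complexityNPtoDR}. Hence $\rho'$ has finitely many distinct subderivations and is regular.

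The step most in need of care is the claim that the subderivation of $\rho'$ below a node is a function of (i) and (ii) alone: one must verify that the reinserted \RuResolve and \RuCompress steps and the downward evolution of annotations are pinned down by this data (up to the harmless reordering noted above), and that the length bound on annotations holds uniformly along $\rho'$ rather than merely at individual macrostates of $\bbA^{D}$. One should also check that re-annotating an arbitrary \NW proof of $\Gamma^{\circ}$ from the root annotation of $\Gamma$ again yields a \BTInf proof; this uses that successfulness is a tail condition, so that the correspondence with accepting runs of $\bbA^{D}$ in Lemma \ref{lem.NWiffDRautomaton} does not depend on the starting macrostate --- alternatively one restricts attention to sequents of the form $\Phi^{\epsilon}$, which suffices for the applications. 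The remaining verifications are routine.
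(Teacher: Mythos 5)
Your proof takes essentially the same route as the paper's (which is a two-line appeal to the corresponding regularity result for \NW together with the determinism of the annotation procedure in Lemma~\ref{lem.NWiffBT}): strip annotations, invoke the regular-\NW-proof result of Niwi\'{n}ski \& Walukiewicz, and re-annotate deterministically. Your one substantive addition --- that regularity of the re-annotated proof also requires the space of annotations to be \emph{finite} (via the length bound of Proposition~\ref{prop.deltaProp}.\ref{prop.deltaProp.Size}), not merely that annotations are added deterministically --- is correct and fills a point the paper leaves implicit.
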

\begin{proof}
	The same statement holds for \NW proofs. In the proof of Lemma \ref{lem.NWiffBT} annotations are added in a deterministic way, thus we end up with a regular \BTInf proof.
\end{proof}	

\setcounter{lemma}{\getrefnumber{lem.BTInfToBT}}
\addtocounter{lemma}{-1}
\begin{lemma}
	An annotated sequent is provable in $\BT$ iff it is provable in $\BTInf$.
\end{lemma}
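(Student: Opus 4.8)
The plan is to prove the two directions of the biconditional separately, following the standard pattern that links non-wellfounded proofs with their cyclic (regular) counterparts.

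For the direction from \BT to \BTInf: given a \BT proof $\pi$ of $\Gamma$, I would consider its infinite unfolding $\pi^*$ (as defined just above), which is a \BT pre-proof without occurrences of \RuDischarge. It remains to verify that every infinite branch $\alpha$ of $\pi^*$ is successful. Any such branch eventually, by regularity of $\pi$ (it has finitely many distinct subderivations, hence finitely many companion nodes), traverses infinitely often a back edge $(l,c(l))$ of $\calT_\pi^C$, and the set of nodes of $\pi$ visited infinitely often by the image of $\alpha$ forms a strongly connected subgraph $S$ of $\calT_\pi^C$. By the definition of a \BT proof there is a pair $(k,s)$ which is preserved and progresses on $S$. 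Unwinding the definitions of ``preserved'' and ``progresses'', this translates into: along the tail of $\alpha$ the string $s$ occurs in $\sfS(v)_k^N$ at every node, no rule \RuCompress[k][t] with $t \sqsubset s$ is applied, and \RuCompress[k][s'1] with $s' = s0\cdots0$ is applied at infinitely many nodes. Hence $(k,s)$ is preserved and progresses infinitely often on a tail of $\alpha$, so $\alpha$ is successful and $\pi^*$ is a \BTInf proof.

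For the converse: given a \BTInf proof of $\Gamma$, by the lemma stated just before (a \BTInf-provable sequent has a \emph{regular} \BTInf proof — which in turn follows from the corresponding fact for \NW and the deterministic way annotations are added in Lemma~\ref{lem.NWiffBT}), I may assume the \BTInf proof $\rho$ is regular. Since $\rho$ has finitely many distinct subderivations, I can fold it into a finite \BT derivation $\pi$: repeatedly, whenever a node $u$ carries a sequent and subderivation already occurring at a proper ancestor $u'$, replace the subtree at $u$ by a discharged leaf with a fresh token, and mark $u'$ with the matching \RuDischarge rule. The resulting $\pi$ is a finite \BT derivation with $\calT_\pi^C$; I must check it is a \BT proof, i.e. that every strongly connected subgraph $S$ in $\calT_\pi^C$ admits a preserved-and-progressing pair $(k,s)$. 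Given such an $S$, I would choose a cycle through $S$ that uses every back edge in $S$, and unfold it into an infinite branch $\alpha$ of $\rho$ (i.e. of the infinite unfolding) whose set of infinitely-visited nodes projects exactly onto $S$. Since $\rho$ is a \BTInf proof, $\alpha$ is successful: some $(k,s)$ is preserved and progresses infinitely often on a tail of $\alpha$. Because the nodes on that tail are precisely (the unfoldings of) the nodes in $S$, the same $(k,s)$ is preserved and progresses on $S$; hence $\pi$ is a \BT proof.

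The main obstacle is the bookkeeping in the converse direction: one needs to be careful that the folding operation does not destroy the \BT side-condition on rule priorities (\RuResolve before \RuCompress before the rest), and, more importantly, that the correspondence between infinite branches of the unfolding and strongly connected subgraphs of $\calT_\pi^C$ is exact in both directions — every SCS arises from some infinite branch whose set of infinitely-recurring nodes is \emph{exactly} that SCS, and conversely. Establishing this requires the observation that in a regular derivation an infinite branch is, up to a finite prefix, an infinite concatenation of companion-to-discharged-leaf paths $p(l)$, so its limit node set is a union of such paths and hence strongly connected; and conversely any SCS can be realized by such an eventually-periodic branch. Once this combinatorial correspondence is nailed down, the transfer of the success condition is immediate from the definitions of ``preserved'' and ``progresses'', which speak directly about node sets.
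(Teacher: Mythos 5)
Your proof is correct and follows essentially the same route as the paper: unfold a \BT proof and read off success of each infinite branch from the strongly connected subgraph formed by its infinitely-recurring nodes, and conversely fold a regular \BTInf proof into a finite derivation and verify the soundness condition by realizing each strongly connected subgraph as the limit set of an eventually-periodic branch of the unfolding. The only (immaterial) difference is that the paper picks its companion/leaf pairs minimally subject to the cycle already carrying a preserved-and-progressing pair, whereas you fold at the first repetition of a subderivation; since both arguments verify the strongly-connected-subgraph condition via the branch correspondence rather than via the choice of folding points, this does not affect the proof.
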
	
\begin{proof}
	Let $\pi$ be a regular $\BTInf$-proof of an annotated sequent $\Phi$. For a node $v \in \pi$ we let $\pi_v$ be the subtree of $\pi$ rooted at $v$. We define the equivalence relation $\sim$ by setting $v \sim u$ if $\pi_v = \pi_u$. As $\pi$ is regular there are only finitely many distinct equivalence classes.
	
	Let $P_\pi$ be the set of infinite paths in $\pi$. We will define functions $f_c, f_l$ from $P_\pi \rightarrow \omega$ that select appropriate positions for discharge rules and corresponding leaves. Let $\alpha = (\alpha_i)_{i \in \omega}$ be in $P_\pi$. As $\alpha$ is successful there exist $t < t'$ such that 
	\begin{itemize}
		\item there exists $(k,s)$ which is preserved and progresses on $\{v_t,...,v_{t'}\}$ and
		\item $v_t \sim v_{t'}$.
	\end{itemize}
	Choose $(t,t')$ minimally with respect to the lexicographic order and define $f_c(\alpha) = t$ and $f_l(\alpha) = t'$. Let $C = \{\alpha_{f_c(\alpha)} \| \alpha \in P_\pi\}$ and $L = \{\alpha_{f_l(\alpha)} \| \alpha \in P_\pi\}$ be the set of companions and leaves of our new proof, respectively.  We let the companion of $l = \alpha_{f^l(\alpha)}$ be $c(l) = \alpha_{f^c(\alpha)}$.
	We define $\pi_L$ to be the subtree of $\pi$ up to the leaves $L$. Then we insert a \RuDischarge rule at every companion node $u \in C$ with discharged assumptions $l \in L$ such that $c(l) = u$. Using König's Lemma we can show that $\pi_L$ is indeed a finite tree. 
	
	\noindent
	It remains to show that
	\begin{itemize}
		\item On every strongly connected subgraph $S$ of $\calT_{\pi_L}^C$ there exists $(k,s)$ such that $(k,s)$ is preserved and progresses on $S$.
	\end{itemize}
	It is easy to see that every infinite path $\alpha$ in $\pi_L$ corresponds to an infinite path $\beta$ in $\pi$.
	Now let $S$ be a strongly connected subgraph $S$ of $\calT_\pi^C$ and $\alpha=(\alpha_i)_{i \in \omega}$ be an infinite path in $\pi_L$ that visits every node in $S$ infinitely often and no other node infinitely often. Let $\beta$ be the corresponding infinite path in $\pi$. As $\beta$ is successful there is $N$ and $(k,s)$ such that $(k,s)$ is preserved and progresses infinitely often on $\{\beta_i \| i \geq N\}$. Yet this yields that $(k,s)$ is preserved and progresses on $S$.

	For the other direction we can show that the infinite unfolding $\pi^*$ of a $\BT$-proof $\pi$ is a $\BTInf$-proof. This follows as every infinite branch in $\pi^*$ can be seen as a path in a strongly connected subgraph of $\pi$, thus the condition for the infinite branches follows from the same condition for every strongly connected subgraph of $\pi$.
\end{proof}

\end{document}